\newif\iflong
\newtheorem{definition}{Definition}
\newtheorem{example}{Example}
\newtheorem{theorem}{Theorem}
\newtheorem{lemma}{Lemma}
\newtheorem{proposition}{Proposition}
\newtheorem{myclaim}{\sc Claim}
\newcommand{\tr}[2]{\iflong{}\S#1\else{}\cite[\S#2]{this-extended-version}\fi}
\newcommand{\safetydefs}{A}
\newcommand{\regsafety}{B}
\newcommand{\connected}{C}
\newcommand{\removelatexerror}{\let\@latex@error\@gobble}
\newcommand{\SubAlgo}[2]{#1 \SubAlgoBlock{#2}}
\newcommand{\RG}[2]{{#1}\setminus{#2}}
\newcommand{\onreceive}{\textbf{when received}\xspace}
\newcommand{\send}{\textbf{send}\xspace}
\newcommand{\from}{\textbf{from}\xspace}
\newcommand{\ToAll}{\textbf{to all}\xspace}
\newcommand{\To}{\textbf{to}\xspace}
\newcommand{\precond}{\textbf{pre:}\xspace}
\newcommand{\assign}[2]{\ensuremath{#1} \ensuremath{\leftarrow} \ensuremath{#2}}
\newcommand{\return}{\textbf{return}\xspace}
\newcommand{\waituntil}{\textbf{wait until received}\xspace}
\newcommand{\upd}{u}
\newcommand{\ter}[1]{U_{#1}}
\newcommand{\VV}{\mathcal{V}}
\newcommand{\CC}{\mathcal{C}}
\newcommand{\GG}{\mathcal{G}}
\newcommand{\PP}{\mathcal{P}}
\newcommand{\FS}{\mathcal{F}}
\newcommand{\RR}{\mathcal{R}}
\newcommand{\WW}{\mathcal{W}}
\newcommand{\ST}{\mathcal{S}}
\newcommand{\GST}{{\sf GST}}
\newcommand{\ALG}{\mathcal{A}}
\newcommand{\readquorum}{\mathsf{quorum\_get}}
\newcommand{\writequorum}{\mathsf{quorum\_set}}
\newcommand{\GETREQ}{{\tt GET\_REQ}}
\newcommand{\GETRESP}{{\tt GET\_RESP}}
\newcommand{\SETREQ}{{\tt SET\_REQ}}
\newcommand{\SETRESP}{{\tt SET\_RESP}}
\newcommand{\pset}{p_{\rm set}}
\newcommand{\pget}{p_{\rm get}}
\newcommand{\cset}{c_{\rm set}}
\newcommand{\cget}{c_{\rm get}}
\newcommand{\Rset}{R_{\rm set}}
\newcommand{\Rget}{R_{\rm get}}
\newcommand{\Wset}{W_{\rm set}}
\newcommand{\Wget}{W_{\rm get}}
\newcommand{\State}{\mathsf{state}}
\newcommand{\phase}{\mathsf{phase}}
\newcommand{\val}{\mathsf{val}}
\newcommand{\ts}{\mathsf{ver}}
\newcommand{\seq}{\mathsf{seq}}
\newcommand{\clock}{\mathsf{clock}}
\newcommand{\oread}{\mathsf{read}}
\newcommand{\owrite}{\mathsf{write}}
\newcommand{\CLOCKREQ}{{\tt CLOCK\_REQ}}
\newcommand{\CLOCKRESP}{{\tt CLOCK\_RESP}}
\newcommand{\pval}{\mathsf{my\_val}}
\newcommand{\view}{\mathsf{view}}
\newcommand{\cview}{\mathsf{aview}}
\newcommand{\leader}{\mathsf{leader}}
\newcommand{\viewtimer}{\mathsf{view\_timer}}
\newcommand{\starttimer}{\mathsf{start\_timer}}
\newcommand{\opropose}{\mathsf{propose}}
\newcommand{\entered}{\textsc{enter}}
\newcommand{\proposed}{\textsc{propose}}
\newcommand{\accepted}{\textsc{accept}}
\newcommand{\decided}{\textsc{decide}}
\newcommand{\OA}{{\tt 1A}}
\newcommand{\OB}{{\tt 1B}}
\newcommand{\TA}{{\tt 2A}}
\newcommand{\TB}{{\tt 2B}}
\newcommand{\rt}{{\sf rt}}
\newcommand{\WR}{{\sf wr}}
\newcommand{\ww}{{\sf ww}}
\newcommand{\rw}{{\sf rw}}
\newcommand{\propose}{\mathsf{propose}}
\newcommand{\Val}{\mathsf{Value}}
\newcommand{\ack}{\mathit{ack}}
\author{Alejandro Naser-Pastoriza} 
\affiliation{
  \institution{IMDEA Software Institute} 
  \institution{Universidad Polit\'ecnica de Madrid}
  \city{Madrid} 
  \country{Spain}
}
\author{Gregory Chockler}
\affiliation{
  \institution{University of Surrey} 
  \city{Guildford} 
  \country{United Kingdom}
}
\author{Alexey Gotsman} 
\affiliation{
  \institution{IMDEA Software Institute} 
  \city{Madrid} 
  \country{Spain}
}
\author{Fedor Ryabinin} 
\affiliation{
  \institution{IMDEA Software Institute} 
  \institution{Universidad Polit\'ecnica de Madrid}
  \city{Madrid} 
  \country{Spain}
}
\begin{document}

\acmYear{2025}
\copyrightyear{2025}
\acmConference[PODC '25]{ACM Symposium on Principles of Distributed Computing}{June 16--20, 2025}{Huatulco, Mexico} 
\acmBooktitle{ACM Symposium on Principles of Distributed Computing (PODC '25), June 16--20, 2025, Huatulco, Mexico} 
\acmDOI{10.1145/3732772.3733529}
\acmISBN{979-8-4007-1885-4/25/06}

\begin{CCSXML}
<ccs2012>
<concept>
<concept_id>10003752.10003809.10010172</concept_id>
<concept_desc>Theory of computation~Distributed algorithms</concept_desc>
<concept_significance>500</concept_significance>
</concept>
</ccs2012>
\end{CCSXML}
\ccsdesc[500]{Theory of computation~Distributed algorithms}

\keywords{Distributed algorithms, lower bounds, atomic registers, atomic snapshots, lattice agreement, consensus}

\title
[Tight Bounds on Channel Reliability via Generalized Quorum Systems]
{Tight Bounds on Channel Reliability via\\ Generalized Quorum Systems}

\begin{abstract}
Communication channel failures are a major concern for the developers of modern
fault-tolerant systems. However, while tight bounds for process failures are
well-established, extending them to include channel failures has remained an
open problem. We introduce \emph{generalized quorum systems} -- a framework that
characterizes the necessary and sufficient conditions for implementing atomic
registers, atomic snapshots, lattice agreement and consensus under arbitrary
patterns of process-channel failures. Generalized quorum systems relax the
connectivity constraints of classical quorum systems: instead of requiring
bidirectional reachability for every pair of write and read quorums, they only
require some write quorum to be \emph{unidirectionally} reachable from some
read quorum. This weak connectivity makes implementing registers particularly
challenging, because it precludes the traditional request/response pattern of
quorum access, making classical solutions like ABD inapplicable. To address
this, we introduce novel logical clocks that allow write and read quorums to
reliably track state updates without relying on bidirectional connectivity.
\end{abstract}

\maketitle

\section{Introduction}
\label{sec:intro}

Tolerating communication channel failures is one of the toughest challenges
facing the developers of modern distributed systems. In fact, according to a
recent study~\cite{osdi-partitions}, a majority of failures attributed to
network partitions led to catastrophic effects whose resolution often required
redesigning core system mechanisms. What makes channel failures even harder to
deal with is that they need to be tolerated in conjunction with ordinary process
failures. The resulting vast space of faulty behaviors makes the analysis of
computability questions under these failure conditions particularly challenging.

It is therefore not surprising that, with a few exceptions, prior work has
studied process and channel failures in isolation from each other. In
particular, it is well-known that tolerating up to $k$ process crashes in a
fully connected network of $n$ processes is possible for a range of problems
(e.g., registers and consensus) if and only if $n \ge 2k+1$~\cite{dls,abd}. Subsequent
work~\cite{flexible-paxos,rambo} generalized this result to the case
when the set of possible faulty behaviors is specified as a \emph{fail-prone
  system} -- a collection of \emph{failure patterns}, each defining a set of
processes that can crash in a single execution~\cite{bqs}. In this case, the
registers and consensus are implementable under a given fail-prone system if there
exists a \emph{read-write quorum system (QS)} in which: any read and write
quorums intersect ({\sf Consistency}); and some read and write quorums of
correct processes are available in every execution ({\sf Availability}). In this
paper we extend these results to failure patterns that may include arbitrary
combinations of both process and channel failures -- namely, process crashes and
channel disconnections.

\begin{example}
\label{ex:channel-failures}
\em Consider a set of processes $\PP=\{a,b,c,d\}$. In Figure~\ref{fig:sample-fs}
we depict a fail-prone system $\FS$, consisting of failure patterns $f_i$,
$i=1..4$ (ignore the sets
$R_i$ and $W_i$ for now). Under failure pattern $f_1$, processes $a$, $b$, $c$
are correct, while $d$ may crash. Channels $(c,a)$, $(a,b)$, $(b,a)$ are
correct, while all others may disconnect.
\end{example}

A plausible conjecture for a tight bound on connectivity under this kind of a
fail-prone system would require the existence of a read-write quorum system
QS$^+$ that preserves {\sf Consistency} but modifies {\sf Availability} to
require that the processes within the available read or write quorum are
strongly connected by correct channels. This ensures that some process can
communicate with both a read and a write quorum (e.g., one in their
intersection), directly enabling the execution of algorithms like ABD~\cite{abd}
and Paxos~\cite{paxos}.  In fact, QS$^+$ was shown to be sufficient for
consensus~\cite{friedman1,friedman2,friemdan-podc-ba,
  aguilera-heartbeat}, and more recently, for registers and consensus under a
more aggressive message loss model~\cite{opodis}.  The latter work also proved
the existence of QS$^+$ to be necessary in the special case of $n = 2k + 1$.

These results, however, leave a noticeable gap. While the existence of QS$^+$ is
optimal for $n=2k+1$, it is unknown whether it is necessary for arbitrary
fail-prone systems, such as those with $k < \lfloor\frac{n-1}{2}\rfloor$, or
those not based on failure thresholds at
all~\cite{quorum-systems-naor,bqs}. Surprisingly, we answer this question in the
negative. We show that atomic registers, atomic snapshots, (single-shot) lattice 
agreement and (partially synchronous) consensus can be implemented even when
none of the available read quorums is strongly connected by correct
channels. Instead, we only require enough 
connectivity for some strongly connected write quorum to be
\emph{unidirectionally} reachable from some read quorum -- a condition that we
formalize via a novel \emph{generalized quorum system (GQS, \S\ref{sec:gqs})}.
We prove that, given a fail-prone system $\FS$ comprising an arbitrary set of
process-channel failure patterns, the above problems are implementable under
$\FS$ if and only if $\FS$ admits a GQS.

\begin{example}
\em
\label{ex:sample-gqs}
Consider the fail-prone system $\FS = \{f_i \mid i=1..4\}$ in
Figure~\ref{fig:sample-fs}. The families of read quorums
$\RR=\{R_i \mid i=1..4\}$ and write quorums $\WW=\{W_i \mid i=1..4\}$ form a
generalized quorum system: each write quorum
$W_i$ is strongly connected and is reachable from
the read quorum $R_i$ through channels correct under the failure
pattern $f_i$. None of the read quorums $R_i$ is strongly connected, thus
relaxing the connectivity requirements of QS$^+$.
\end{example}

Our results also show that any solution to the above-mentioned problems can
guarantee termination only within the write quorums of some GQS (e.g., processes
$a$ and $b$ under the failure pattern $f_1$ from
Figure~\ref{fig:sample-fs}). Such a restricted termination guarantee is expected
in our setting, because channel failures may isolate some correct
processes, making it impossible to ensure termination at all of
them. Accordingly, to prove our upper bounds, we first present an implementation
of atomic registers on top of a GQS that ensures wait-freedom within its write
quorums (\S\ref{sec:upper-bound}). Since atomic snapshots can be constructed
from atomic registers~\cite{hagit-snapshots}, and lattice agreement from
snapshots~\cite{hagit-lattice-agreement}, the upper bounds for snapshots and
lattice agreement follow. 
To prove the lower bounds, we go in the reverse direction: we first prove the
bound for lattice agreement (\S\ref{sec:lower-bound}); then the above-mentioned
constructions imply the bounds for snapshots and registers.

Implementing registers on top of a GQS presents a unique challenge. To complete
a register operation invoked at a process, this process needs to communicate
with both a read and a write quorum -- a typical pattern in algorithms like
ABD. However, the limited connectivity within read quorums means that the
process cannot query a read quorum by simply sending messages to its members and
awaiting responses.

\begin{example}
  \em Assume that a register operation is invoked at process $a$ under failure
  pattern $f_1$ from Figure~\ref{fig:sample-fs}. In this case all channels
  coming into process $c \in R_1$ may have disconnected. This makes it
  impossible for $a$ to request information from this member of the read quorum
  by sending a message to it. Of course, $c$ could periodically push information
  to $a$ through the correct channel $(c,a)$, without waiting for explicit
  requests. But because the network is asynchronous, it is challenging for $a$
  to determine whether the information it receives from $c$ is up to date, i.e.,
  captures all updates preceding the current operation invocation at $a$ -- a
  critical requirement for ensuring linearizability.
\end{example}

We address this challenge using novel logical clocks that processes use to tag
the information they push downstream. These clocks are cooperatively maintained by
processes when updating a write quorum or querying a read quorum. We encapsulate
the corresponding protocol into reusable {\em quorum access functions}, which
are then used to construct an ABD-like algorithm for registers.

Finally, we also show that the existence of a GQS is a tight bound on the
connectivity required for implementing consensus under partial synchrony
(\S\ref{sec:consensus}). Interestingly, implementing consensus
under arbitrary process and channel failures is simpler than implementing
registers, because a process can exploit the eventual timeliness of the network
to determine if the information it receives is up to date.

%%% Local Variables:
%%% mode: latex
%%% TeX-master: "main.tex"
%%% End:

\section{System Model and Preliminary Definitions}
\label{sec:model}

\begin{figure}[t]
  \centering
  \includegraphics[width=\linewidth,trim=450 150 450 150,clip]{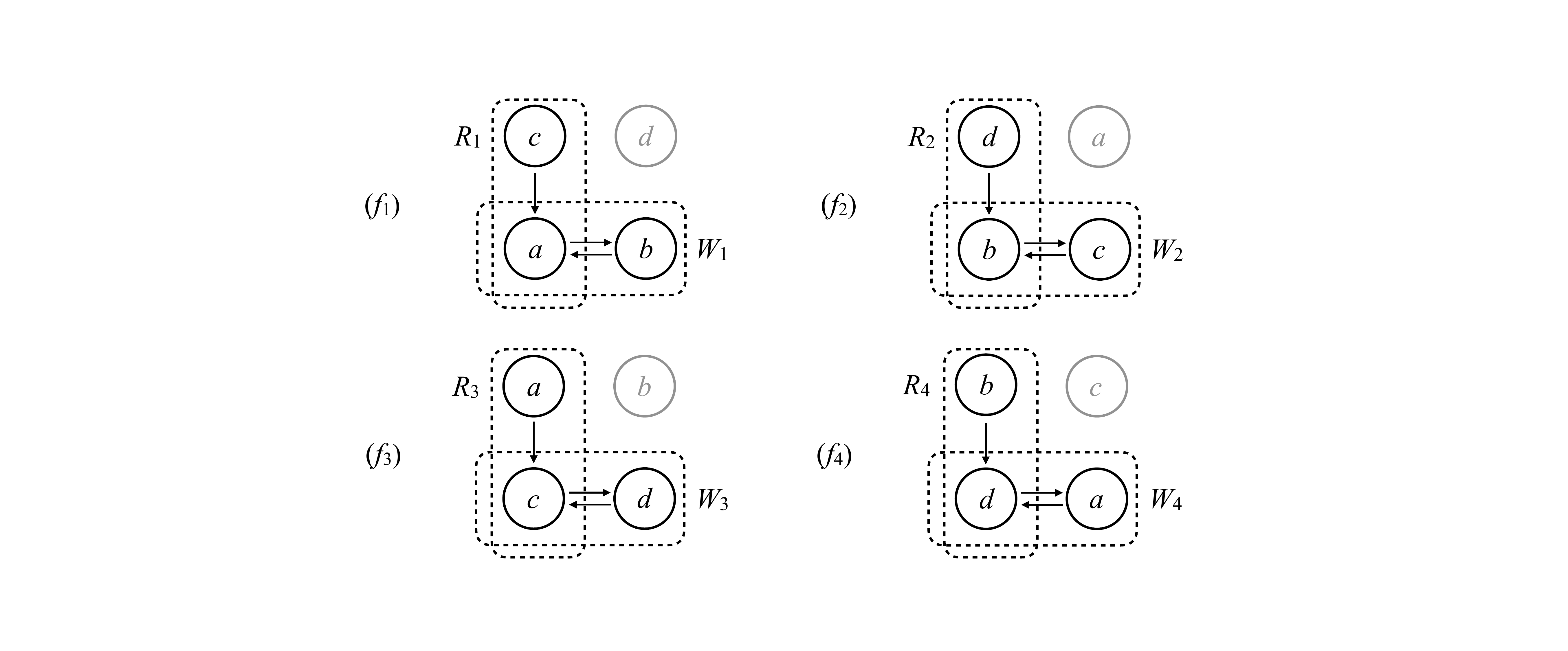}
  \caption{A fail-prone system $\FS=\{f_i \mid i=1..4\}$ and a generalized
    quorum system $(\FS, \RR, \WW)$, for $\RR=\{R_i \mid i = 1..4\}$ and
    $\WW=\{W_i \mid i = 1..4\}$. Solid circles denote correct processes; gray
    circles, processes that may crash; solid arrows, reliable
    channels; missing arrows, channels that may disconnect.}
  \Description{A fail-prone system $\FS=\{f_i \mid i=1..4\}$ and a generalized
    quorum system $(\FS, \RR, \WW)$, for $\RR=\{R_i \mid i = 1..4\}$ and
    $\WW=\{W_i \mid i=1..4\}$. Solid circles denote correct processes; gray
    circles, processes that may crash; solid arrows, reliable
    channels; and missing arrows, channels that may disconnect.}
  \label{fig:sample-fs}
\end{figure}

\paragraph{System model.}
We consider an {\em asynchronous} system with a set $\PP$ of $n$ processes 
that may fail by crashing.
A process is {\em correct} if it never crashes, and {\em faulty} otherwise. 
Processes communicate by exchanging messages through a set of unidirectional 
channels $\CC$:
for every pair of processes $p, q \in \PP$ there is a channel $(p, q) \in \CC$ 
that allows $p$ to send messages to $q$.
Channels can be {\em correct} or {\em faulty}. 
A {\em correct} channel is reliable: it delivers all messages sent by a correct process.
A {\em faulty} channel fails by disconnection: from some point on it drops all
messages sent through it.

\paragraph{Fail-prone systems.}  To state our results, we need a way of
specifying which processes and channels can fail during an execution.
Following our previous work~\cite{opodis}, we do this by generalizing the classical notion of a
fail-prone system~\cite{bqs} to include channel failures. A {\em failure
pattern} is a pair $(P,C) \in 2^\PP \times 2^\CC$ that defines which processes
and channels are allowed to fail in a single execution.
We assume that $C$ contains only
channels between correct processes, as channels incident to faulty processes are
considered faulty by default: $(p,q) \in C \implies \{p,q\} \cap P =
\emptyset$. Given a failure pattern $f = (P, C)$, an execution $\sigma$ of the
system is $f$-\emph{compliant} if at most the processes in $P$ and at most the channels
in $C$ fail in $\sigma$.  A {\em fail-prone system} $\FS$ is a set of failure
patterns (see Figure~\ref{fig:sample-fs}).

\begin{example}
\label{example:std-fail-prone}
\em
The standard failure model where any minority of 
processes may fail and channels between correct processes are reliable
is captured by the fail-prone system
$\FS_M=\{(Q,\emptyset) \mid Q \subseteq \mathcal{P} \wedge |Q| \le \lfloor
\frac{n-1}{2} \rfloor\}$. 
\end{example}

\paragraph{Safety.} We consider algorithms that implement the following
objects in the model just introduced: multi-writer multi-reader (MWMR)
atomic registers, single-writer multi-reader (SWMR) atomic snapshots,
(single-shot) lattice agreement~\cite{hagit-snapshots} and consensus.
These objects provide {\em operations} that can be
invoked by clients -- e.g., reads and writes in the case of registers. Their safety
properties are standard. We introduce them for some of the objects as needed and
defer the rest to \tr{\ref{sec:app-safety-defs}}{\safetydefs}.

\paragraph{Liveness.} Defining liveness properties under our failure model is
subtle, because channel failures may isolate some correct processes, making it
impossible to ensure termination at all of them. To deal with this, we adapt the
classical notions of obstruction-freedom and wait-freedom by parameterizing them
based on the failures allowed and the subsets of correct processes where
termination is required. We use the weaker obstruction-freedom in our lower
bounds and the stronger wait-freedom in our upper bounds.

For a failure pattern $f=(P,C)$ and a set of processes
$T\subseteq \PP\setminus P$, we say that an algorithm $\ALG$ is
$(f,T)$-\emph{wait-free} if, for every process $p\in T$, operation
$\mathit{op}$, and $f$-compliant fair\footnote{Recall that an execution $\sigma$
is \emph{fair} if every process that is correct in $\sigma$ takes an infinite
number of steps in $\sigma$.} 
execution $\sigma$ of $\ALG$, if
$\mathit{op}$ is invoked by $p$ in $\sigma$, then $\mathit{op}$ eventually
returns.
For example, we may require an algorithm to be $(f_1, T_1)$-wait-free for the
failure pattern $f_1$ in Figure~\ref{fig:sample-fs} and $T_1 = \{a, b\}$.
This means that operations invoked at $a$ and $b$ must return despite 
the failures of process $d$ and channels $(a,c)$, $(b,c)$ and $(c,b)$.

An operation $\mathit{op}$ \emph{eventually executes solo} in an execution
$\sigma$ if there exists a suffix $\sigma'$ of $\sigma$ such that all operations
concurrent with $\mathit{op}$ in $\sigma'$ are invoked by faulty processes.
We say that an algorithm $\ALG$ is $(f,T)$-\emph{obstruction-free} if, for every
process $p \in T$, operation $\mathit{op}$, and $f$-compliant fair execution
$\sigma$ of $\ALG$, if $\mathit{op}$ is invoked by $p$ and eventually executes
solo in $\sigma$, then $\mathit{op}$ eventually returns. This notion of
obstruction-freedom aligns with its well-known shared memory
counterparts~\cite{solo-orig,obf-orig,obf-formal}.

We lift the notions of obstruction-freedom and wait-freedom to a fail-prone
system $\FS$ and a {\em termination mapping} $\tau : \FS \rightarrow 2^\PP$ -- a
function mapping each failure pattern $f\in\FS$ to a subset of correct
processes whose operations are required to terminate. We say that an algorithm
$\ALG$ is $(\FS,\tau)$-{\em wait-free} if, for every $f\in\FS$, $\ALG$ satisfies
$(f,\tau(f))$-wait-freedom. We define $(\FS,\tau)$-obstruction-freedom
similarly.

\begin{example}
\em
The standard guarantee of wait-freedom under a minority of process
failures corresponds to $(\FS_M, \tau_M)$-wait-freedom, where $\FS_M$ is
defined in Example~\ref{example:std-fail-prone} and $\tau_M$ selects the set of
all correct processes:
$\forall f=(Q,\emptyset)\in \FS_M.\, \tau_M(f) = \mathcal{P} \setminus Q$.
\end{example}

%%% Local Variables:
%%% mode: latex
%%% TeX-master: "main.tex"
%%% End:

\section{Generalized Quorum Systems}
\label{sec:gqs}

Fault-tolerant distributed algorithms commonly ensure the consistency of
replicated state using {\em quorums}, i.e., intersecting sets of processes.  It
is common to separate quorums into two classes -- read and write quorums
-- so that the intersection is required only between a pair of quorums from
different classes. For example, in a variant of the ABD register implementation
values are stored at a write quorum and fetched from a read
quorum~\cite{abd,rambo}. The intersection between read and write quorums ensures
that a read operation observes the latest completed write. Similarly, in the
Paxos consensus algorithm decision proposals are stored at a phase-2 quorum
(analogous to a write quorum) and information about previously accepted
proposals is gathered from a phase-1 quorum (analogous to a read
quorum)~\cite{paxos,flexible-paxos}.  The intersection between phase-1 and
phase-2 quorums ensures the uniqueness of decisions.
The classical definition of a quorum system considers only process failures, not
channel failures~\cite{quorum-systems-naor,bqs}. In our framework, we can
express this definition as follows.
\begin{definition}\label{def:classical-qs}
  Consider a fail-prone system $\FS$ that disallows channel failures between
  correct processes: $\forall (P,C)\in\FS.\ C=\emptyset$. A \textbf{quorum
  system} is a triple $(\FS,\RR,\WW)$, where $\RR\subseteq 2^\PP$ is a family
  of read quorums, $\WW\subseteq 2^\PP$ is a family of write quorums, and the
  following conditions hold:
  \begin{itemize}
    \item {\sf Consistency.} For all $R\in\RR$ and $W\in\WW$, $R\cap
      W\neq\emptyset$.
    \item {\sf Availability.} For all $f\in\FS$, there exist $R\in\RR$ and $W\in\WW$
      such that all processes in $R\cup W$ are correct according to $f$.
  \end{itemize}
\end{definition}
% !!!
% In addition to the quorum intersection required by {\sf Consistency}, the {\sf
%   Availability} property guarantees that a pair of operational quorums exist
% under any failure scenario described by $\FS$.

\begin{example} 
\label{example:classical-qs}
\em
Consider a system with $n$ processes, where at most
$k \le \lfloor\frac{n-1}{2}\rfloor$ processes can fail. The following triple
$(\FS,\RR,\WW)$ is a quorum system~\cite{flexible-paxos}:
\begin{itemize}
  \item The fail-prone system is such that at most $k$ processes can fail, and channels 
  between correct processes do not fail:\\
  $\FS=\{(P,\emptyset) \mid P \subseteq \PP \wedge |P| \le k\}$.

\item Read quorums are of size at least $n-k$:\\
  $\RR=\{R \mid R \subseteq \PP \wedge |R| \ge n-k\}$.

\item Write quorums are of size at least $k+1$:\\
  $\WW=\{W \mid W \subseteq \PP \wedge |W| \ge k+1\}$.
\end{itemize}
\end{example}

The above quorum system illustrates the usefulness of distinguishing between
read and write quorums: this allows trading off smaller write quorums for larger
read quorums. In the special case where $k = \lfloor\frac{n-1}{2}\rfloor$ we get
$\RR = \WW$, so that both read and write quorums are majorities of processes.

The existence of a classical quorum system is sufficient to implement atomic
registers, atomic snapshots, lattice agreement and consensus in a model without channel
failures.  We now generalize the notion of a quorum system to accommodate such
failures. A straightforward generalization would preserve {\sf Consistency}, but
modify {\sf Availability} to require that the processes within the available
read or write quorum are strongly connected by correct channels. This ensures
that some process can communicate with both a read and a write quorum (e.g., one
in their intersection), directly enabling the execution of algorithms like
ABD. Surprisingly, we find that this strong connectivity requirement is
unnecessarily restrictive: the above-listed problems can be solved even when
read quorums are not strongly connected. To define the corresponding notion of a
quorum system, we use the following concepts:
\begin{itemize}
\item {\em Network graph:} let $\GG=(\PP,\CC)$ be the directed graph with
all processes as vertices and all channels as edges.
\item {\em Residual graph:} for a failure pattern $f=(P,C)$, let
  $\GG\setminus f$ be the subgraph of $\GG$ obtained by removing all processes
  in $P$, their incident channels, and all channels in $C$.
\item {\em $f$-availability:} a set $Q\subseteq \PP$ is $f$-available if it
  contains only processes correct according to $f$ and it is strongly connected in $\GG\setminus f$.
  This implies that all processes in $Q$ can communicate with each other via
  channels correct according to $f$.
\item {\em $f$-reachability:} a set $W\subseteq \PP$ is $f$-reachable from a set $R\subseteq \PP$ if both
  $W$ and $R$ contain only processes correct according to $f$, and every member of $W$ can be
  reached by every member of $R$ via a directed path in $\GG\setminus f$.
\end{itemize}

\begin{example}
\label{example:fail-prone-system}
\em In Figure~\ref{fig:sample-fs}, for each $i=1..4$, $W_i$ is $f_i$-available,
and $W_i$ is $f_i$-reachable from $R_i$.
\end{example}

\begin{definition}
  \label{def:gqs}
  A \textbf{generalized quorum system} is a triple $(\FS,\RR,\WW)$, where
  $\FS$ is a fail-prone system, $\RR\subseteq 2^\PP$ is a family of read quorums,
  $\WW \subseteq 2^\PP$ is a family of write quorums, and the following conditions 
  hold:
\begin{itemize}
\item {\sf Consistency.} For every $R\in\RR$ and $W\in\WW$, 
  $R\cap W\neq\emptyset$.
\item {\sf Availability.} For all $f\in\FS$, there exist $W\in\WW$ and $R\in\RR$
  such that $W$ is $f$-available, and $W$ is $f$-reachable from $R$.
\end{itemize}
\end{definition}
Informally speaking, {\sf Availability} guarantees that an operational write
quorum can unidirectionally receive information from an operational read quorum
under any failure scenario defined by $\FS$.

A classical quorum system is a
special case of a generalized quorum system.  Indeed, if $\FS$ disallows channel
failures between correct processes, then every correct write quorum can be
trivially reached from every read quorum, and Definition~\ref{def:gqs} becomes
equivalent to Definition~\ref{def:classical-qs}.

\begin{example}
\label{example:gqs}
\em
  Consider the fail-prone system $\FS = \{f_i \mid i=1..4\}$ in
  Figure~\ref{fig:sample-fs} and let $\WW=\{W_i \mid i=1..4\}$ and
  $\RR=\{R_i \mid i=1..4\}$. Then the triple $(\FS,\RR,\WW)$ is a generalized
  quorum system.
% !!!
  Indeed:
\begin{itemize}
  \item {\sf Consistency}. For each $i, j = 1..4$, $R_i\cap W_j\neq\emptyset$.
  \item {\sf Availability}. For each $i=1..4$, $W_i$ is $f_i$-available, and
    $W_i$ is $f_i$-reachable from $R_i$.
  \end{itemize}
\end{example}

Note that the above $(\FS,\RR,\WW)$ is a valid generalized quorum system even
though the processes in each read quorum are not strongly connected via correct
channels: some pair of processes are only connected unidirectionally. 
This relaxation allows read quorums to be formed under a broader range of failure
scenarios. In contrast, processes within each write quorum validating {\sf
  Availability} must be strongly connected via correct channels. In fact, for a
given failure pattern $f$ we can show that different write quorums validating
{\sf Availability} with respect to $f$ must also be strongly connected via
correct channels.

\begin{proposition}
  \label{proposition:f-avail-scc}
  Let $(\FS,\RR,\WW)$ be a generalized quorum system.  For each $f\in\FS$,
  the following set of processes is strongly connected in $\GG\setminus f$:
\begin{align*}
U = \bigcup\{W \mid {} & {W \in \WW} \wedge (W \text{ is } f\text{-available}) \land {}\\
& \exists R\in\RR.\, (W \text{ is } f\text{-reachable from } R)\}.
\end{align*}
\end{proposition}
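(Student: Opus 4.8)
The plan is to show that the union $U$ of all write quorums that simultaneously satisfy $f$-availability and $f$-reachability (from some read quorum) forms a single strongly connected component in $\GG\setminus f$. Since each individual such $W$ is $f$-available and hence strongly connected in $\GG\setminus f$, the whole difficulty lies in showing that any two distinct contributing write quorums $W_1$ and $W_2$ are mutually reachable. So first I would fix $f\in\FS$ and take arbitrary $W_1, W_2$ in the union, each witnessed by read quorums $R_1, R_2$ respectively, and reduce the goal to establishing a directed path in both directions between an arbitrary $u\in W_1$ and an arbitrary $v\in W_2$.

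**The key intersection argument.** The central tool will be {\sf Consistency}: any read quorum intersects any write quorum. I would use this to bridge the two write quorums through their read quorums. Concretely, since $W_2$ is $f$-reachable from $R_2$, pick some process $w$ in the intersection $R_2\cap W_1$ (nonempty by {\sf Consistency}); then $w\in R_2$ can reach every member of $W_2$, and $w\in W_1$ is reachable from every member of $W_1$ (as $W_1$ is strongly connected). Composing these paths shows every member of $W_1$ reaches every member of $W_2$. The symmetric argument, using $w'\in R_1\cap W_2$ and the $f$-reachability of $W_1$ from $R_1$, gives the reverse direction. Thus $W_1$ and $W_2$ are mutually reachable.

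**Assembling the component.** Once mutual reachability between arbitrary contributing pairs is established, I would argue that $U$ is strongly connected as follows. Take any two processes $x, y\in U$; each lies in some contributing write quorum, say $x\in W_x$ and $y\in W_y$. If $W_x=W_y$, the path exists by $f$-availability of that quorum; otherwise the pairwise argument above yields directed paths $x\to y$ and $y\to x$. All these paths lie in $\GG\setminus f$, since every process and channel used belongs to a quorum that is $f$-available or to a reachability witness in $\GG\setminus f$, so the path never uses a failed process or channel. Hence $U$ is strongly connected in $\GG\setminus f$.

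**Main obstacle.** The step requiring the most care is ensuring the composed paths stay entirely within $\GG\setminus f$ and correctly orienting the reachability relation, since $f$-reachability is directional ($W$ reachable \emph{from} $R$) while strong connectivity is symmetric. I expect the subtle point to be correctly chaining a ``$R_2$-reaches-$W_2$'' path with a ``within-$W_1$'' path at the shared process $w\in R_2\cap W_1$, making sure the direction of each edge is consistent so that the concatenation genuinely yields a directed path from a $W_1$-vertex to a $W_2$-vertex.
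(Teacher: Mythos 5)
Your proof is correct and takes essentially the same route as the paper's: both arguments fix a bridging process in $W_1 \cap R_2$ (nonempty by {\sf Consistency}), compose a within-$W_1$ path given by $f$-availability with a path from that process to $W_2$ given by $f$-reachability from $R_2$, and then invoke the symmetric argument for the reverse direction. The directionality concern you flag in your final paragraph is handled exactly as you anticipate, and your closing remarks about paths staying in $\GG\setminus f$ are automatic since both $f$-availability and $f$-reachability are defined directly in terms of $\GG\setminus f$.
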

\begin{proof}
  Fix $f\in\FS$ and let $p_1,p_2\in U$. 
  Then there exist $W_1,W_2\in\WW$ such that
  \begin{itemize}
    \item $(p_1\in W_1) \wedge (W_1 \text{ is } f\text{-available})\land {}$\\
      $\exists R_1\in\RR.\, (W_1 \text{ is } f\text{-reachable from } R_1)$; and
    \item $(p_2\in W_2) \wedge (W_2 \text{ is } f\text{-available})\land {}$\\
      $\exists R_2\in\RR.\, (W_2 \text{ is } f\text{-reachable from } R_2)$.
    \end{itemize}
  Because any read and write quorums intersect, $W_1\cap R_2\neq\emptyset$.
  Fix $q \in W_1\cap R_2$.
  Since $W_1$ is $f$-available, there exists a path via correct channels from $p_1$ to $q$.
  And since $W_2$ is $f$-reachable from $R_2$, there exists a path via correct channels from $q$
  to $p_2$.
  Thus, $p_2$ is reachable from $p_1$ via correct channels.
  We can analogously show that $p_1$ is reachable from $p_2$ via correct channels, as required.
\end{proof}

For $f\in\FS$ we denote the strongly connected component of $\GG\setminus f$
containing $U$ by $\ter{f}$: this component includes all write quorums that
validate {\sf Availability} with respect to $f$. The {\sf Availability} property
of the generalized quorum system ensures that $\ter{f} \not= \emptyset$.

%%% Local Variables:
%%% mode: latex
%%% TeX-master: "main.tex"
%%% End:

\section{Main Results}
\label{sec:gqs:main}

We now state our main results for asynchrony, proved in
\S\ref{sec:upper-bound}-\ref{sec:lower-bound}: the existence of a generalized
quorum system is a tight bound on the process and channel failures that can be
tolerated by any implementation of MWMR atomic registers, SWMR atomic snapshots,
and lattice agreement (we handle partially synchronous consensus in
\S\ref{sec:consensus}). The following theorem establishes that the existence of
a generalized quorum system is sufficient to implement each of the three
objects.  For each $f\in\FS$, these implementations provide wait-freedom within
the strongly connected component $\ter{f}$.

\begin{theorem}
\label{thm:reg-upper}
Let $(\FS,\RR,\WW)$ be a generalized quorum system and $\tau : \FS \rightarrow 2^{\PP}$ 
be the termination mapping such that for each $f\in\FS$, $\tau(f)=\ter{f}$.
Then there exists an $(\FS,\tau)$-wait-free implementation for each of the
following objects: MWMR atomic registers, SWMR atomic snapshots, and lattice agreement.
\end{theorem}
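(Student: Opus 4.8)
The plan is to prove the theorem by a chain of reductions, establishing the register implementation directly and then deriving snapshots and lattice agreement from known constructions. By the remarks in the introduction, atomic snapshots can be built from atomic registers~\cite{hagit-snapshots} and lattice agreement from snapshots~\cite{hagit-lattice-agreement}; these constructions are purely client-side and preserve the termination guarantees, so it suffices to produce an $(\FS,\tau)$-wait-free MWMR atomic register implementation with $\tau(f)=\ter{f}$. The core of the proof is therefore the register construction on top of the given generalized quorum system $(\FS,\RR,\WW)$.

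For the register, I would follow an ABD-like structure, but replace the naive request/response quorum access with the \emph{quorum access functions} promised in the introduction. Concretely, a write operation must (i) learn a timestamp newer than any completed write by querying a read quorum, and (ii) store its tagged value at a write quorum; a read operation must (i) query a read quorum to obtain the latest value, and (ii) write it back to a write quorum to ensure it is not lost. The key design point forced by {\sf Availability} is that a process cannot always send a request into a read quorum and await replies, since the read quorum members may only be \emph{unidirectionally} reachable toward the write quorum, not toward the querying process. Instead, information must flow along the correct paths guaranteed by $f$-reachability: read-quorum members continuously push their state (tagged with logical clocks) downstream toward the strongly connected write quorum $W$, which aggregates it and makes it available to processes in $\ter{f}$. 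I would formalize $\writequorum$ and $\readquorum$ as the two reusable primitives, proving for each a safety property (what was stored/observed) and a liveness property (that a caller in $\ter{f}$ eventually completes), and then compose them into $\owrite$ and $\oread$ exactly as in ABD.

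The two obligations to discharge are linearizability (safety) and $(\FS,\tau)$-wait-freedom (liveness). For safety, the argument rests on {\sf Consistency}: every read quorum intersects every write quorum, so the query phase of any operation observes the timestamp written by any preceding completed write, yielding the standard ABD timestamp ordering; I would define the linearization point of each operation by its timestamp (breaking ties by writer identity and read-after-write order) and verify the real-time order is respected. For liveness, I would fix $f=(P,C)\in\FS$ and a caller $p\in\tau(f)=\ter{f}$, and invoke {\sf Availability} to obtain a witness pair $R\in\RR,\,W\in\WW$ with $W$ being $f$-available and $f$-reachable from $R$. Since $p\in\ter{f}$ lies in the strongly connected component containing $W$ (Proposition~\ref{proposition:f-avail-scc}), $p$ can both reach and be reached by $W$ through correct channels; this is what lets a caller in $\ter{f}$ drive both phases to completion despite read quorums being only unidirectionally connected.

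The main obstacle I anticipate is the liveness and freshness argument for $\readquorum$: because read-quorum members push their state asynchronously rather than answering explicit requests, the caller must decide when the aggregated information it has received is \emph{up to date} — i.e., reflects every write that completed before the current query began. This is precisely the difficulty flagged in the introduction's example with process $c$. The logical clocks must be defined so that the write quorum $W$ can certify that it has heard from a full read quorum $R$ at a point no earlier than the query's start; establishing the invariant that links a clock value observed at $W$ to a lower bound on the set of updates propagated from $R$ is the crux, and I expect it to require a careful inductive invariant over the cooperatively maintained clocks together with the $f$-reachability paths. Everything else — the timestamp bookkeeping and the ABD-style linearization — is routine once these two primitives are in hand.
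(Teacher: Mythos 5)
Your overall architecture is the same as the paper's: the identical reduction chain (registers $\to$ snapshots $\to$ lattice agreement via the constructions of~\cite{hagit-snapshots,hagit-lattice-agreement}), an ABD-style register layered on $\readquorum()$/$\writequorum()$ primitives satisfying {\sf Validity}, {\sf Real-time ordering} and {\sf Liveness}, and the same liveness argument from {\sf Availability} plus Proposition~\ref{proposition:f-avail-scc}. But there is a genuine gap at exactly the point you yourself flag as ``the crux'': you never construct the clock mechanism that makes {\sf Real-time ordering} hold, and your sketch of it points in the wrong direction. You propose that the write quorum $W$ should certify that it has heard from a full read quorum at a point no earlier than the query's start. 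The paper's protocol (Figure~\ref{code:qaf-complex}) instead puts the burden on the \emph{writer}: $\writequorum(\upd)$ collects \SETRESP{} clocks from a write quorum, takes their maximum $\cset$, and then \emph{delays its own completion} until some read quorum has pushed \GETRESP{} messages with clocks $\geq\cset$ (line~\ref{code:qaf-complex:write-quorum:wait-read-quorum}); dually, $\readquorum()$ obtains its cutoff $\cget$ by querying the clocks of a \emph{write} quorum via \CLOCKREQ/\CLOCKRESP{} and then waits for read-quorum pushes with clocks $\geq\cget$ --- an inversion of the traditional quorum roles. Without the writer-side wait, the reader has no sound way to choose a cutoff: clock values observed at $W$ alone give no lower bound on which updates the read quorum has propagated, since a write could complete (having stored its update at $\Wset$) while the read quorum's clocks are still arbitrarily stale. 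This is precisely the hole the delayed completion closes.

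A consequence is that the ``careful inductive invariant over the cooperatively maintained clocks together with the $f$-reachability paths'' you anticipate is unnecessary: once the handshake is set up, correctness reduces to two single-process intersection arguments. For Lemma~\ref{lemma:read-up-to-date} one fixes $p_l\in R\cap\Wset$ and uses the clock increment in the \SETREQ{} handler (line~\ref{code:qaf-complex:write-quorum:inc-clock}) plus clock monotonicity to conclude that any \GETRESP{} from $p_l$ with clock $\geq\cset\geq c'_l$ must have been sent after $p_l$ applied $\upd$; for Theorem~\ref{thm:qaf-safety} one fixes $p_l\in\Rset\cap\Wget$ and observes that the completed write already forced $p_l$'s clock to at least $\cset$, so the reader computes $\cget\geq\cset$. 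Two smaller deviations: the paper assumes, without loss of generality, that connectivity in $\GG\setminus f$ is transitive (simulated by message forwarding), which is what lets pushes from $R$ reach a caller $p\in\ter{f}$ that need not belong to $W$ or $R$ --- your alternative of having $W$ ``aggregate'' the pushed state and re-serve it would require its own protocol and proof, which you do not give; and the paper proves linearizability not by assigning linearization points but via an acyclicity argument on a dependency graph, though your timestamp-based sketch is a reasonable substitute for that routine part. The decomposition and proof obligations you state are right; what is absent is the construction that constitutes the theorem's actual content.
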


The next theorem establishes a matching lower bound, 
showing that the existence of a generalized quorum system is necessary to implement 
any of the three objects. 
This assumes a weak termination guarantee that only requires 
obstruction-freedom to hold at some non-empty set of processes for each failure pattern.

\begin{theorem}
\label{thm:reg-lower}
Let $\mathcal{F}$ be a fail-prone system and $\tau : \FS \rightarrow 2^{\PP}$ 
be a termination mapping such that for each $f\in\FS$, $\tau(f)\neq\emptyset$.
Assume that there exists an $(\FS,\tau)$-obstruction-free implementation of any of the
following objects: MWMR atomic registers, SWMR atomic snapshots, or lattice agreement.
Then there exist $\RR$ and $\WW$ such that $(\FS,\RR,\WW)$ is a generalized quorum system.
Moreover, for each $f\in\FS$, we have $\tau(f)\subseteq \ter{f}$.
\end{theorem}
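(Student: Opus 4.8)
The plan is to prove the lower bound (Theorem~\ref{thm:reg-lower}) by extracting read and write quorums from the behavior of a given obstruction-free implementation $\ALG$, and then arguing that the extracted families satisfy {\sf Consistency} and {\sf Availability}. Since lattice agreement can be built from snapshots, and snapshots from registers, it suffices to derive the bound for MWMR atomic registers and then invoke these constructions in reverse; so I would focus the argument on registers. The guiding intuition is that, under each failure pattern $f$, a write operation that terminates must have ``deposited'' its value at some set of processes that a later read can rely on, and a read that terminates must have ``gathered'' information from some set of processes; the sets that writes reach form $\WW$ and the sets reads consult form $\RR$. {\sf Consistency} should then follow from a classical indistinguishability/partitioning argument: if some $R \in \RR$ and $W \in \WW$ were disjoint, one could construct an execution in which a read at a process in $\tau(f)$ misses the latest completed write, violating linearizability.

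\textbf{Extracting the quorums.} First I would fix, for each $f = (P,C) \in \FS$, a process $p \in \tau(f)$ guaranteed to terminate, and consider an $f$-compliant fair execution in which a write (resp. read) invoked at a process runs solo and returns. The write quorum $W_f$ is defined as the set of processes whose state the terminating write causally influences before returning -- concretely, the processes that have received and processed the write's messages along correct channels by the time the operation returns. Because $p \in \tau(f) \subseteq \PP \setminus P$ must be able to both complete a write and have a subsequent read reflect it, the candidate $W_f$ must be $f$-available (strongly connected via correct channels, so the value is consistently maintained) and $f$-reachable from the read-gathering set $R_f$ (so that a read can actually learn of the write through unidirectional push). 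I expect to define $\RR$ and $\WW$ as the collections of all such extracted sets over $\FS$, possibly closed upward, and then verify that $(\FS, \RR, \WW)$ meets Definition~\ref{def:gqs}.

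\textbf{Proving the two properties.} For {\sf Consistency}, I would suppose toward a contradiction that $R \cap W = \emptyset$ for some extracted $R \in \RR, W \in \WW$, where $R$ arises from a read under $f_R$ and $W$ from a write under $f_W$. The key step is to stitch together a single execution -- using the crash/disconnection freedom permitted by the relevant failure patterns -- in which the write completes at $W$, then a read at the terminating process gathers only from $R$, and by disjointness the read cannot observe the written value, so it returns a stale value and breaks linearizability (here the obstruction-freedom hypothesis is used to force the read to run solo and still return). For {\sf Availability}, the extraction already guarantees that for each $f$ there is an $f$-available $W$ that is $f$-reachable from some $R$, which is exactly what the property demands; the final claim $\tau(f) \subseteq \ter{f}$ follows because any process in $\tau(f)$ must, by terminating its operations, lie in a write quorum validating {\sf Availability}, hence in the strongly connected component $\ter{f}$ identified in Proposition~\ref{proposition:f-avail-scc}.

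\textbf{Main obstacle.} The hardest part will be the extraction of $W_f$ and $R_f$ and the proof that $W_f$ is genuinely $f$-available and $f$-reachable from $R_f$, rather than merely being ``the set of processes touched.'' The subtlety is that under unidirectional connectivity the terminating process may never have received acknowledgments, so I cannot simply read off quorums from request/response pairs as in the classical ABD lower bound. I would therefore need a careful indistinguishability argument that pins down \emph{which} processes' persistent state the operation depends on, and argue that any deficiency in strong connectivity or reachability of these sets could be exploited -- by disconnecting a channel allowed to fail under $f$ and constructing an indistinguishable execution -- to make a terminating operation either block (contradicting obstruction-freedom under solo execution) or return an inconsistent value (contradicting linearizability). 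Getting this construction to respect the constraint that $f$ only permits failures of channels between correct processes, while still isolating the intended quorum structure, is where the real work lies.
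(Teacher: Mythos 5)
There is a genuine gap, and it comes before any of the technical work: your reduction direction is inverted. The implementability chain runs registers $\Rightarrow$ snapshots $\Rightarrow$ lattice agreement, so a lower bound proved for registers transfers to \emph{neither} of the other two objects. In particular it says nothing about single-shot lattice agreement, which -- as the paper stresses -- is too weak to implement a multi-shot object such as an MWMR register, so you cannot ``invoke these constructions in reverse'' from the register case. The paper does the opposite of what you propose: it proves the bound directly for lattice agreement, the weakest object at the end of the chain, and then the constructions of snapshots from registers and of lattice agreement from snapshots transfer the bound to snapshots and registers. Under your plan the lattice-agreement case of Theorem~\ref{thm:reg-lower} remains unproven, so the theorem as stated does not follow.

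The second gap is the quorum extraction, which you flag yourself as the ``main obstacle'' without resolving -- and which the paper's proof avoids entirely. Rather than defining $W_f$ operationally as the set of processes a terminating write causally influences (a set that need not be strongly connected, is not canonical across executions, and, as you note, cannot be read off request/response pairs under unidirectional connectivity), the paper defines the quorums purely graph-theoretically: Lemma~\ref{lemma:lattice-agreement-lower-1} (itself proved by an indistinguishability argument for lattice agreement) shows that $\tau(f)$ must be strongly connected in $\RG{\GG}{f}$; then $W_f$ is taken to be the whole strongly connected component of $\RG{\GG}{f}$ containing $\tau(f)$, and $R_f$ is the set of processes that can reach $W_f$, including $W_f$ itself. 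With these maximal choices, {\sf Availability} holds by construction, $\tau(f)\subseteq W_f\subseteq \ter{f}$ is immediate, and all the indistinguishability work is concentrated in refuting a {\sf Consistency} violation: assuming $W_f\cap R_g=\emptyset$ for two patterns $f,g$, one builds solo executions $\alpha$ (a propose under $f$) and $\beta$ (a propose under $g$ sharing the common prefix of startup steps of $R_f\setminus W_f$), reorders them using unreachability claims, and stitches a finite \emph{failure-free} execution $\sigma$ in which two incomparable values are output, contradicting {\sf Comparability}. Your sketch of the consistency contradiction is in the right spirit, but note that your final step ($\tau(f)\subseteq\ter{f}$ because a terminating process ``must lie in a write quorum'') silently presupposes exactly the strong-connectivity fact of Lemma~\ref{lemma:lattice-agreement-lower-1}: the set extracted from one operation of one process need not contain all of $\tau(f)$, nor be strongly connected, so without that lemma the inclusion does not follow.
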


The theorem also shows $\ter{f}$ is the largest set of processes for which
termination can be guaranteed under the failure pattern $f$. Furthermore, the
two theorems imply that if termination can be guaranteed for at least one
process, then it can also be guaranteed for all processes in $\ter{f}$.

\begin{example}
  \em Consider the generalized quorum system $(\FS,\RR,\WW)$ from
  Figure~\ref{fig:sample-fs}. Then $U_{f_1}=\{a,b\}$, $U_{f_2}=\{b,c\}$,
  $U_{f_3}=\{c,d\}$ and $U_{f_4}=\{d,a\}$.  Consider $\tau$ such that
  $\tau(f_i) = U_{f_i}$, $i=1..4$. By Theorem~\ref{thm:reg-upper}, there exists
  an $(\FS,\tau)$-wait-free implementation for any of the three objects
  considered.  Suppose now that we change $f_1$ to $f'_1$ that additionally
  fails the channel $(a, b)$. Let $\FS'=\{f'_1,f_2,f_3,f_4\}$. It is easy to
  check that there do not exist $\RR'$ and $\WW'$ that would form a generalized
  quorum system $(\FS',\RR',\WW')$. Thus, Theorem~\ref{thm:reg-lower} implies
  that there is no implementation of any of the three objects that would provide
  obstruction-freedom (and, by extension, wait-freedom) anywhere under $\FS'$.
\end{example}

To prove the upper bound (Theorem~\ref{thm:reg-upper}), we construct an
$(\FS,\tau)$-wait-free implementation of MWMR atomic registers
(\S\ref{sec:upper-bound}). Since SWMR atomic snapshots can be constructed from
the registers~\cite{hagit-snapshots}, and  lattice agreement can in turn be
constructed from snapshots~\cite{hagit-lattice-agreement}, we naturally get the
upper bounds for the latter two problems. 
We consider the weakest variant of lattice
agreement~\cite{hagit-lattice-agreement}, which is single-shot and therefore
cannot be used for implementing multi-shot objects, such as registers.
Thus, to prove the lower bound (Theorem~\ref{thm:reg-lower}), we first establish
it for lattice agreement (\S\ref{sec:lower-bound}); then the above-mentioned
constructions imply the lower bound for snapshots and registers.

%%% Local Variables:
%%% mode: latex
%%% TeX-master: "main.tex"
%%% End:

\section{Upper Bound for Atomic Registers}
\label{sec:upper-bound}

Fix a generalized quorum system $(\FS,\RR,\WW)$ and a termination mapping
$\tau : \FS \rightarrow 2^{\PP}$ such that $\tau(f)=\ter{f}$ for each
$f \in \FS$. Without loss of generality, we assume that the connectivity
relation of the graph $\GG\setminus f$ is transitive for each $f \in \FS$: if
not, transitivity can be easily simulated by having all processes forward every
received message. 
To implement atomic registers using the quorum system, we need to
come up with a way for a process to contact a read and a write quorum, as
required by algorithms such as ABD~\cite{abd}. This is challenging in our
setting because processes within a read quorum may not be strongly connected 
by correct channels.

\begin{example}
\label{example:gqs-2}
\em Consider the generalized quorum system in Figure~\ref{fig:sample-fs} and the
failure pattern $f_1$. The available read and write quorums are $R_1=\{a,c\}$
and $W_1=\{a,b\}$. Since Theorem~\ref{thm:reg-upper} requires
$\tau(f_1)=\ter{f_1}$, a register implementation validating it has to ensure
wait-freedom within $W_1$. However, all channels coming into $c$ may have
failed. This makes it impossible for a member of $W_1$, such as $a$, to request
information from some of the members of $R_1$, such as $c$, by sending them an
explicit message to this end.
\end{example}

\paragraph{Quorum access functions.} We encapsulate the mechanics necessary
to deal with this challenge using the following \emph{quorum access functions},
which allow a process to obtain up-to-date information from a quorum. We assume
that the top-level protocol, such as a register implementation, maintains a
state from a set $\ST$ at each process. Then the interface of the access functions
is as follows:
\begin{itemize}
\item $\readquorum() \in 2^\ST$: returns the states of all members of some read
  quorum; and
\item $\writequorum(\upd)$: applies the {\em update function}
  $\upd : \ST \to \ST$ to the states of all members of some write quorum.
\end{itemize}

We require that these functions satisfy the following properties:
\begin{itemize}
\item {\sf Validity.}  For any state $s$ returned by $\readquorum()$,
  there exists a subset of previous invocations
  $\{\writequorum(u_i) \mid i=1..k\}$ such that $s$ is the result of applying
  the update functions in $\{u_i \mid i=1..k\}$ to the initial state in some
  order.
\item {\sf Real-time ordering.} If $\writequorum(\upd)$
  terminates, then its effect is visible to any later $\readquorum()$,
  i.e., the set returned by $\readquorum()$ includes at least one state to
  which $\upd$ has been applied.
\item {\sf Liveness.} The functions are $(\FS,\tau)$-wait-free, i.e.,
  they terminate at every member of $\tau(f)$ for any $f \in \FS$.
\end{itemize}

As we show in the following, quorum access functions are sufficient to program
an ABD-like algorithm for atomic registers. 

\paragraph{Quorum access functions for classical quorum systems.}
To illustrate the concept of quorum
access functions, in Figure~\ref{code:qaf-simpler} we provide their
implementation using a classical quorum system that disallows channel failures
(Definition~\ref{def:classical-qs}). 
Each process stores the state of the
top-level protocol, such as a register implementation, in a $\State$
variable. This state is managed by the implementation of the quorum access
functions, but its structure is opaque to this implementation: it
can only manipulate the state by applying update functions passed by the
callers. Each process also maintains a monotonically increasing $\seq$ number
(initially $0$),
which is used to generate a unique identifier for each quorum access function
invocation at this process. This identifier is then added to every message
exchanged by the implementation, so that the process can tell which messages
correspond to which invocations.

Upon a call to $\readquorum()$ at a process, it broadcasts a $\GETREQ$ message
(line~\ref{code:qaf-simpler:readquorum:getreq}). Any process receiving this
responds with a $\GETRESP$ message that carries its current state
(line~\ref{code:qaf-simpler:get-response}). The $\readquorum()$ invocation
returns when it accumulates such responses from a read quorum
(line~\ref{code:qaf-simpler:readquorum:receive-getresp}). Upon a call to
$\writequorum(\upd)$ at a process, it broadcasts a $\SETREQ$ message carrying
the function $\upd$ (line~\ref{code:qaf-simpler:setreq}). Any process receiving
this applies $\upd$ to its current state and responds with $\SETRESP$
(line~\ref{code:qaf-simpler:send-setresp}). The $\writequorum()$ invocation
returns when it accumulates such responses from a write quorum
(line~\ref{code:qaf-simpler:writequorum:receive-getresp}).

\begin{figure}[t]
	\removelatexerror
	\begin{algorithm*}[H]
		\DontPrintSemicolon
		\SetAlgoNoLine
		\setcounter{AlgoLine}{0}

		$\State\in\mathcal{S}$\quad \emph{// opaque state of the top-level protocol}\;
		$\assign{\seq}{0}$\;
		
    \smallskip
    \smallskip

		\Function{$\readquorum()$}{\label{code:qaf-simpler:readquorum}
			$\assign{\seq}{\seq + 1}$\;
			\send $\GETREQ(\seq)$ \ToAll\;\label{code:qaf-simpler:readquorum:getreq}
			\waituntil $\{\GETRESP(\seq,s_j) \mid p_j\in R\}$
                        \label{code:qaf-simpler:readquorum:receive-getresp}\linebreak 
				\phantom\ \ \ \bf{from some} $R\in\RR$\;
			\return $\{s_j \mid p_j\in R \}$\;
		}

		\smallskip
		
		\SubAlgo{\onreceive $\GETREQ(k)$ \from $p_j$\label{code:qaf-simpler:get-version}}{
	    	\send $\GETRESP(k,\State)$ \To $p_j$\label{code:qaf-simpler:get-response}\;
	    }

		\smallskip

		\Function{$\writequorum(\upd)$}{\label{code:qaf-simpler:writequorum}
			$\assign{\seq}{\seq + 1}$\;
			\send $\SETREQ(\seq,\upd)$ \ToAll\; \label{code:qaf-simpler:setreq}
			\waituntil $\{\SETRESP(\seq) \mid p_j\in W\}$
                        \label{code:qaf-simpler:writequorum:receive-getresp}\linebreak
				\phantom\ \ \ \bf{from some} $W\in\WW$\;
		}

		\smallskip

		\SubAlgo{\onreceive $\SETREQ(k,\upd)$ \from $p_j$\label{code:qaf-simpler:set-version}}{
			$\assign{\State}{\upd(\State)}$\;
			\send $\SETRESP(k)$ \To $p_j$\; \label{code:qaf-simpler:send-setresp}
		}

  \end{algorithm*}
	\caption{Quorum access functions for a classical quorum system: the protocol
          at a process $p_i$.}
  \Description{Quorum access functions for a classical quorum system: the protocol
          at a process $p_i$.}
	\label{code:qaf-simpler}
\end{figure}

It is easy to see that the above implementation ensures the {\sf
Validity} and {\sf Real-time ordering} properties, the latter because
any read and write quorums intersect. Finally, this implementation guarantees
wait-freedom at every correct process ({\sf Liveness}): {\sf Availability}
ensures that there exist a read quorum and a write quorum of correct processes;
then the fact that the fail-prone system disallows channel failures ensures that
any process can communicate with these quorums.

\paragraph{Quorum access functions for generalized quorum systems.}  We now
show how we can implement the quorum access functions for generalized quorum
systems, despite the lack of strong connectivity within read quorums. We use
Example~\ref{example:gqs-2} for illustration. Recall that in this example we
need to ensure termination within $W_1$ and, in particular, at $a$. An
implementation of $\readquorum()$ at $a$ needs to obtain state
snapshots from every member of $R_1$. However, channel failures may make it
impossible for $a$ to send a message to $c \in R_1 \setminus W_1$ to request
this information. Of course, $c$ could just periodically propagate its state to
all processes it is connected to, without them asking for it explicitly: by {\sf
Availability}, $W_1$ is $f$-reachable from $R_1$, so that messages sent by $c$
will eventually reach $a$. But because the network is asynchronous,
\emph{the process $a$ cannot easily determine when the information it receives is up
to date}, i.e., when it captures the effects of all $\writequorum()$
invocations that completed before $\readquorum()$ was called at $a$ -- 
as necessary to satisfy {\sf Real-time ordering}.

To address this challenge, each process maintains a monotonically increasing
logical clock, stored in the variable $\clock$
(initially 0; this clock is different from the usual logical clocks for tracking
causality~\cite{lamport-clocks,vectorclocks1}). We then modify the
implementation of the quorum access functions in Figure~\ref{code:qaf-simpler}
as shown in Figure~\ref{code:qaf-complex}:

\begin{itemize}
\item {\bf Periodic state propagation (line~\ref{code:qaf-complex:periodically}):} 
	A process periodically advances its $\clock$ and propagates its current
  $\State$ and $\clock$ in a $\GETRESP$ message, without waiting for an explicit
  request. The clock value in the
  message indicates the logical time by which the process had this state.

\item {\bf Clock updates during state changes (line~\ref{reg:set-version}):} 
	When handling a $\SETREQ(k, \upd)$ message, a process increments its
  $\clock$ and sends it as part of the $\SETRESP$ message.
  The process thereby indicates the logical time by which it has incorporated $\upd$ into its state.

\item {\bf Delaying the completion of $\writequorum$
    (line~\ref{code:qaf-complex:write}):}  
	Upon a $\writequorum(\upd)$ invocation,
  the process broadcasts $\upd$ in a
  $\SETREQ$ message and waits for $\SETRESP$ messages from every member of some
  write quorum $\Wset$, as in the classical implementation. The process selects
  the highest clock value among the responses received and stores it in a variable
  $\cset$.
  It then waits until some read quorum $\Rset$ reports having
  $\clock\geq\cset$ before completing the invocation. As we show in the
  following, this wait serves to ensure that future $\readquorum()$ invocations
  will observe the update $\upd$.
  
\item {\bf Clock cutoff for $\readquorum$ (line~\ref{code:qaf-complex:read}):} 
	Upon a $\readquorum()$ invocation, the
  process first determines a clock value $\cget$, delimiting how up-to-date the
  states it will return should be. To this end, the process broadcasts a
  $\CLOCKREQ$ message. Any process receiving this message responds with a $\CLOCKRESP$
  message that carries its current $\clock$ value. The process executing
  $\readquorum()$ waits until it receives such responses from all members of
  some write quorum and picks the highest clock value among those
  received -- this is the desired clock cut-off $\cget$. Finally, the process
  waits until it receives $\GETRESP(s_j,c_j)$ messages with $c_j\geq \cget$ from
  all members of some read quorum and returns the collected
  states $s_j$ to the caller.
\end{itemize}

\begin{figure}[t]
	\removelatexerror%
	\begin{minipage}{1.1\linewidth}
	\begin{algorithm*}[H]
		\DontPrintSemicolon
		\SetAlgoNoLine
		\setcounter{AlgoLine}{0}

    $\State\in\mathcal{S}$\quad \emph{// opaque state of the top-level protocol}\;
		$\assign{\seq,\clock}{0,0}$\;
		
    \smallskip
    \smallskip
                
		\Function{$\readquorum()$\label{code:qaf-complex:read}}{
			$\assign{\seq}{\seq + 1}$\;
			\send $\CLOCKREQ(\seq)$ \ToAll\;
			\waituntil $\{\CLOCKRESP(\seq,c_j) \mid p_j\in \Wget\}$\linebreak
				\phantom\ \ \ \bf{from some} $\Wget\in\WW$\;\label{code:qaf-complex:receive-clock}
			$\assign{\cget}{\max\{c_j \mid p_j\in \Wget\}}$\;\label{code:qaf-complex:read-quorum:cmin}
			\waituntil $\{\GETRESP(s_j,c_j) \mid p_j\in \Rget\}$\linebreak
				\phantom\ \ \ \bf{from some} $\Rget\in\RR$ \bf{where} $\forall j.\  c_j \geq \cget$
        \;\label{code:qaf-complex:read-quorum:wait-read-quorum}
			\return $\{s_j \mid p_j\in \Rget\}$\;
		}		

		\smallskip

		\SubAlgo{\onreceive $\CLOCKREQ(k)$
	    	\from $p_j$}{\label{reg:get-version}
	    	\send $\CLOCKRESP(k,\clock)$ \To $p_j$\; \label{code:qaf-complex:send-clockresp}
	    }

		\smallskip

		\SubAlgo{{\bf periodically}}{\label{code:qaf-complex:periodically}
			$\assign{\clock}{\clock+1}$\;
			\send $\GETRESP(\State,\clock)$ \ToAll\; \label{code:qaf-complex:send-getresp}
    }
	
		\smallskip

		\Function{$\writequorum(\upd)$\label{code:qaf-complex:write}}{
			$\assign{\seq}{\seq + 1}$\; \label{code:qaf-complex:write-seq-inc}
			\send $\SETREQ(\seq,\upd)$ \ToAll\;
			\waituntil $\{\SETRESP(\seq,c_j) \mid p_j\in \Wset\}$ \linebreak
				\phantom\ \ \ \bf{from some} $\Wset\in\WW$\;\label{code:qaf-complex:write-quorum:wait}
			$\assign{\cset}{\max\{c_j \mid p_j\in \Wset\}}$\;\label{code:qaf-complex:write-quorum:cmin}
			\waituntil $\{\GETRESP(\_,c_j) \mid p_j\in \Rset\}$ \linebreak
				\phantom\ \ \ \bf{from some} $\Rset\in\RR$ \bf{where} $\forall j.\  c_j \geq \cset$
        \;\label{code:qaf-complex:write-quorum:wait-read-quorum}
		}

		\smallskip

		\SubAlgo{\onreceive $\SETREQ(k,\upd)$ \from $p_j$\label{reg:set-version}}{
			$\assign{\State}{\upd(\State)}$\; \label{code:qaf-complex:write-quorum:set-state}
			$\assign{\clock}{\clock+1}$\; \label{code:qaf-complex:write-quorum:inc-clock}
			\send $\SETRESP(k,\clock)$ \To $p_j$\; \label{code:qaf-complex:write-quorum:setresp}
    }
	\end{algorithm*}
  \end{minipage}
	\caption{Quorum access functions for a generalized quorum system:
          the protocol at a process $p_i$.}
  \Description{Quorum access functions for a generalized quorum system:
          the protocol at a process $p_i$.}
	\label{code:qaf-complex}
\end{figure}

Note that $\writequorum$ and $\readquorum$ operations work in tandem:
$\writequorum$ delays its completion until clocks have advanced sufficiently at
a read quorum; this allows $\readquorum$ to establish a clock cutoff capturing
all prior completed updates. Interestingly, $\writequorum$ uses read quorums for
this purpose, while $\readquorum$ uses write quorums -- an inversion of the
traditional quorum roles.

It is easy to see that this implementation validates the {\sf Validity}
property of the quorum access functions. We now prove that it also validates
{\sf Real-time ordering}. First, consider $\cset$ computed by a
$\writequorum(\upd)$ invocation
(lines~\ref{code:qaf-complex:write-seq-inc}-\ref{code:qaf-complex:write-quorum:cmin}).
The following lemma shows that querying the states of a read quorum with clocks
$\ge \cset$ is sufficient to observe $\upd$.
\begin{lemma}
\label{lemma:read-up-to-date}
Assume that $\cset$ is computed by a $\writequorum(\upd)$ invocation at a
process $\pset$
(lines~\ref{code:qaf-complex:write-seq-inc}-\ref{code:qaf-complex:write-quorum:cmin}).
Consider a set of messages $\{\GETRESP(s_j,c_j) \mid p_j \in R\}$ sent by all
members of some read quorum $R$, each with $c_j\geq\cset$. Then some $s_j$ has
incorporated $\upd$.
\end{lemma}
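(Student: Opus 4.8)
The plan is to exploit the intersection between the read quorum $R$ and the write quorum $\Wset$ used by the $\writequorum(\upd)$ invocation, and then to argue that the per-process logical clock faithfully records the moment at which $\upd$ becomes incorporated into a process's state. The whole argument is pinned on a single carefully chosen witness process living in $R \cap \Wset$.

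First I would invoke the {\sf Consistency} property of the generalized quorum system to fix a process $p \in R \cap \Wset$. Since $p \in \Wset$, its $\SETRESP$ counted towards the set collected at line~\ref{code:qaf-complex:write-quorum:wait}, so $p$ must have handled the $\SETREQ$ of this very $\writequorum(\upd)$ invocation (matching on $\seq$) in the handler at line~\ref{reg:set-version}: it applied $\upd$ to its state, incremented $\clock$, and replied with the resulting clock value, say $c_p$. Because $\cset$ is the maximum of the clock values reported by the members of $\Wset$ (line~\ref{code:qaf-complex:write-quorum:cmin}), we obtain $c_p \le \cset$.

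The crux is a monotonicity invariant local to $p$. I would observe that $p$'s clock is strictly increasing over its lifetime: every increment, whether in the periodic handler or in the $\SETREQ$ handler, raises it by exactly one, so each value is attained at a unique instant. In particular the value $c_p$ is attained precisely when $p$ processed this $\SETREQ$ and applied $\upd$; and since the top-level state only ever accumulates updates and never discards them, $p$'s state contains the effect of $\upd$ at every instant at which $p$'s clock is $\ge c_p$. I would then connect this to the message $\GETRESP(s_j,c_j)$ that $p$ contributes to $R$: by hypothesis $c_j \ge \cset \ge c_p$, and a $\GETRESP$ reports the clock value held at the moment it is sent (line~\ref{code:qaf-complex:send-getresp}). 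By the monotonicity invariant this send occurred at or after the instant $p$'s clock reached $c_p$, so the accompanying state $s_j$ had already incorporated $\upd$, which is the required conclusion.

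The main obstacle, and the step I expect to need the most care, is this clock-tracking argument rather than the quorum intersection. Concretely, I must justify that a reported clock value of at least $c_p$ is a \emph{reliable} witness that $\upd$ has already been applied at $p$; this relies jointly on the strict, per-step monotonicity of the clock and on the fact that the $\max$ taken at line~\ref{code:qaf-complex:write-quorum:cmin} bounds the specific clock value $c_p$ at which $p$ incorporated $\upd$ (without this $\max$ over the write quorum, the inequality $c_p \le \cset$ would fail and the witness would collapse). I would also be careful to match messages by $\seq$, so that the $\SETREQ$ $p$ handled is indeed the one carrying $\upd$ from the invocation under consideration.
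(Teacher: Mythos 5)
Your proposal is correct and follows essentially the same argument as the paper: fix a witness $p\in R\cap\Wset$ via quorum intersection, use $\cset\geq c_p$ from the $\max$ at line~\ref{code:qaf-complex:write-quorum:cmin}, and exploit clock monotonicity together with the increment in the $\SETREQ$ handler (which occurs after the state update) to conclude that any $\GETRESP$ carrying a clock $\geq\cset$ was sent after $\upd$ was applied. The only difference is presentational: the paper phrases the final step as a contradiction on the order in which $p$ sends its $\SETRESP$ and $\GETRESP$ messages, whereas you argue directly from the unique instant at which $p$'s clock reaches $c_p$ -- the same underlying fact.
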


\begin{proof}
  To compute the value of $\cset$, the process $\pset$ waits for
  $\SETRESP(\_,c_j')$ messages from all members $p_j$ of a write quorum $\Wset$.
  Since any read quorum intersects any write quorum, there exists 
  $p_l\in R \cap\Wset$.
  Because $p_l\in\Wset$, this process sends a $\SETRESP(\_,c'_l)$ message to $\pset$ during
  the $\writequorum(u)$ invocation.
  The process $\pset$ computes $\cset=\max\{c_j' \mid p_j\in \Wset\}$,
  so that $\cset\geq c'_l$.
  Because $p_l\in R$, this process also sends a $\GETRESP(s_l,c_l)$ message with
  $c_l\geq\cset \geq c'_l$.
  At this moment $p_l$ has $\clock = c_l$.
  Hence, if $p_l$ sends the $\GETRESP(s_l,c_l)$ message before sending 
  the $\SETRESP(\_,c'_l)$ message, then
  due to the increment at line~\ref{code:qaf-complex:write-quorum:inc-clock}
  and the fact that process clocks never decrease, 
  we must have $c_l<c'_l$. But this contradicts the fact $c_l\geq c'_l$ that we
  established earlier.
  Hence, $p_l$ must send the $\SETRESP(\_,c'_l)$ message before sending
  the $\GETRESP(s_l,c_l)$ message and, therefore, $s_l$ incorporates $u$.
\end{proof}

In the light of the above lemma, for $\readquorum()$ to validate {\sf Real-time
ordering}, it just needs to find a clock value that is $\ge \cset$ of any
previously completed $\writequorum()$. As we show in the following proof, this
is precisely what is achieved by the computation of $\cget$ in $\readquorum()$.

\begin{theorem}[{\sf Real-time ordering}]
\label{thm:qaf-safety}
If a $\writequorum(\upd)$ operation terminates, then its effect is visible to
any subsequent $\readquorum()$ invocation.
\end{theorem}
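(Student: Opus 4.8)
The plan is to chain together the clock-tracking mechanism established in Lemma~\ref{lemma:read-up-to-date} with the $\cget$-computation in $\readquorum()$. Suppose a $\writequorum(\upd)$ invocation at process $\pset$ terminates, computing some value $\cset$, and that a $\readquorum()$ invocation is called afterwards (in real time) at some process. I want to show the states this $\readquorum()$ returns include one that has incorporated $\upd$. By Lemma~\ref{lemma:read-up-to-date}, it suffices to prove that the clock cutoff $\cget$ computed by $\readquorum()$ satisfies $\cget \ge \cset$: once that inequality holds, the $\readquorum()$ collects $\GETRESP(s_j,c_j)$ messages from a read quorum $\Rget$ with every $c_j \ge \cget \ge \cset$ (line~\ref{code:qaf-complex:read-quorum:wait-read-quorum}), which is exactly the hypothesis of the lemma, so some returned $s_j$ incorporates $\upd$.

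\textbf{Establishing $\cget \ge \cset$.} This is the crux of the argument, and it rests on the real-time ordering together with the fact that clocks never decrease. Before the $\writequorum(\upd)$ invocation terminated, it completed the wait at line~\ref{code:qaf-complex:write-quorum:wait-read-quorum}: it received $\GETRESP(\_,c_j)$ messages with $c_j \ge \cset$ from all members of some read quorum $\Rset$. By {\sf Consistency}, any read quorum intersects any write quorum; in particular $\Rset$ intersects the write quorum $\Wget$ that the later $\readquorum()$ queries for clock values at line~\ref{code:qaf-complex:receive-clock}. Fix a process $q \in \Rset \cap \Wget$. The plan is to argue that the $\clock$ value $q$ reports in its $\CLOCKRESP$ to the $\readquorum()$ is at least as large as the value $q$ reported in its $\GETRESP$ to $\writequorum(\upd)$. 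Since $q$ sent its $\GETRESP(\_,c)$ with $c \ge \cset$ \emph{before} the $\writequorum(\upd)$ terminated, and the $\readquorum()$ (hence $q$'s receipt of the $\CLOCKREQ$ and its $\CLOCKRESP$ reply) happens strictly \emph{after} that termination in real time, the monotonicity of $q$'s clock gives that the clock value in $q$'s $\CLOCKRESP$ is $\ge c \ge \cset$. Then $\cget = \max\{c_j \mid p_j \in \Wget\}$ is at least this value, so $\cget \ge \cset$.

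\textbf{The main obstacle} is pinning down the real-time/causal ordering of the relevant message-sending events at the shared process $q$, so that clock monotonicity can be applied cleanly. The delicate point is that $q$'s $\GETRESP$ to $\writequorum$ is a periodically broadcast message (line~\ref{code:qaf-complex:send-getresp}), not a direct reply, so I cannot appeal to a simple request/response causality. Instead I must reason: the $\writequorum(\upd)$ invocation terminates only after \emph{receiving} $q$'s $\GETRESP$ with clock value $\ge \cset$; therefore $q$ sent that $\GETRESP$, and set its $\clock$ to at least $\cset$, at a real time preceding the termination of $\writequorum(\upd)$. Since $\readquorum()$ begins after $\writequorum(\upd)$ terminates, $q$'s handling of the $\CLOCKREQ$ and the sending of the corresponding $\CLOCKRESP$ occur strictly later in real time. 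Because $q$'s local $\clock$ is monotonically nondecreasing across its own steps, the clock value $q$ sends in $\CLOCKRESP$ cannot be smaller than the one it sent earlier in $\GETRESP$, yielding $\cget \ge \cset$ and completing the proof via Lemma~\ref{lemma:read-up-to-date}.
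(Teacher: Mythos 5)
Your proof is correct and follows essentially the same route as the paper's: both reduce the theorem to showing $\cget \ge \cset$ via a process $q \in \Rset \cap \Wget$, use clock monotonicity at $q$ together with the real-time ordering of the $\GETRESP$ (sent before $\writequorum(\upd)$ terminated) and the $\CLOCKRESP$ (sent after $\readquorum()$ was invoked), and then invoke Lemma~\ref{lemma:read-up-to-date} to conclude. Your explicit remark that the $\GETRESP$ is a periodic broadcast rather than a request/response reply is a nice clarification, but the underlying argument is the same as the paper's.
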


\begin{proof}
  Assume that $\writequorum(\upd)$ terminates at a process $\pset$ before
  $\readquorum()$ is invoked at a process $\pget$. Since any read quorum
  intersects any write quorum, there exists $p_l\in \Rset\cap\Wget$. Since
  $p_l \in \Rset$, before $\writequorum(\upd)$ terminated, $\pset$ received
  $\GETRESP(\_, c_l')$ from $p_l$ with $c_l' \ge \cset$
  (line~\ref{code:qaf-complex:write-quorum:wait-read-quorum}). Hence, by the
  time $\writequorum(\upd)$ terminated, $p_l$ had $\clock \ge \cset$. We also
  have $p_l \in \Wget$, and thus $\pget$ received $\CLOCKRESP(\_, c_l)$ from
  $p_l$ at line~\ref{code:qaf-complex:receive-clock}. This message was sent at
  line~\ref{code:qaf-complex:send-clockresp} after $\readquorum()$ had been
  invoked, and thus, after $\writequorum(\upd)$ had terminated. Above we
  established that by the latter point $p_l$ had $\clock \ge \cset$, and thus,
  $c_l \geq \cset$. The process $\pget$ computed
  $\cget = \max\{c_j \mid p_j\in \Wget\}$ at
  line~\ref{code:qaf-complex:read-quorum:cmin}, so that
  $\cget\geq c_l\geq\cset$. Then each $\GETRESP(s_j,c_j)$ that $\pget$ received
  from $p_j \in \Rget$ at
  line~\ref{code:qaf-complex:read-quorum:wait-read-quorum} satisfies
  $c_j\geq\cget\geq\cset$. By Lemma~\ref{lemma:read-up-to-date}, some $s_j$ has
  incorporated $\upd$, as required.
 \end{proof}

\begin{theorem}[{\sf Liveness}]
\label{thm:qaf-liveness}
The protocol in Figure \ref{code:qaf-complex} is $(\FS,\tau)$-wait-free.
\end{theorem}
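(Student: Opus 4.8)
The plan is to prove that both $\readquorum()$ and $\writequorum(\upd)$ terminate when invoked at a process $p \in \tau(f) = \ter{f}$ in any $f$-compliant fair execution. I would fix such an $f = (P,C)$ and let $W^\star \in \WW$ be a write quorum that is $f$-available and $f$-reachable from some read quorum $R^\star \in \RR$; the {\sf Availability} property of the generalized quorum system guarantees such $W^\star, R^\star$ exist, and by Proposition~\ref{proposition:f-avail-scc} every such $W^\star$ lies in the strongly connected component $\ter{f}$, which is itself $f$-available. The key structural facts I would extract at the outset are: (i) all processes in $\ter{f}$ are correct and mutually reachable via correct channels (so, using the transitivity assumption on $\GG\setminus f$, any message broadcast by one member of $\ter{f}$ eventually reaches every other member); (ii) $W^\star$ is reachable from $R^\star$ via correct channels; and (iii) every member of $R^\star$ and $W^\star$ is correct under $f$, hence takes infinitely many steps in the fair execution $\sigma$.

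\medskip
\noindent\textbf{Termination of the blocking waits.} The core of the argument is that each \waituntil in Figure~\ref{code:qaf-complex} is eventually satisfied. I would handle them in dependency order. First, the $\CLOCKREQ$/$\CLOCKRESP$ round in $\readquorum()$: since the invoker $p \in \ter{f}$ can reach $W^\star$ via correct channels, every member of $W^\star$ eventually receives $\CLOCKREQ(\seq)$ and, being correct, replies with $\CLOCKRESP$; because $W^\star$ is reachable from $p$ and its members can route back to $p$ (both lie in $\ter{f}$, which is strongly connected), $p$ eventually collects responses from all of $W^\star \in \WW$, satisfying line~\ref{code:qaf-complex:receive-clock}. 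The analogous argument handles the $\SETREQ$/$\SETRESP$ round of $\writequorum$ at line~\ref{code:qaf-complex:write-quorum:wait}. The subtle waits are the two that demand $\GETRESP$ messages with clock $\ge c$ from a \emph{read} quorum (lines~\ref{code:qaf-complex:read-quorum:wait-read-quorum} and~\ref{code:qaf-complex:write-quorum:wait-read-quorum}): here I would use $R^\star$ as the witnessing read quorum and argue that the periodic block (line~\ref{code:qaf-complex:periodically}) drives every member's $\clock$ monotonically to infinity, so for any fixed cutoff $c$ each member of $R^\star$ eventually sends a $\GETRESP$ with clock $\ge c$; since $W^\star$ (and hence the invoker, routing through $\ter{f}$) is $f$-reachable from $R^\star$ via correct channels, all these messages eventually arrive. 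Thus the waiter eventually holds $\GETRESP$ messages with clock $\ge c$ from the entire quorum $R^\star \in \RR$.

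\medskip
\noindent\textbf{Main obstacle.} The delicate point I expect is establishing that clocks diverge to infinity \emph{at precisely the members of $R^\star$ that the invoker can actually hear from}, and reconciling the two directions of reachability: the invoker $p$ sits in $\ter{f}$, which is strongly connected, but $R^\star$ need not be strongly connected and need not contain $p$. I must be careful that the $\GETRESP$ messages from $R^\star$ genuinely reach $p$. The clean way is to invoke $f$-reachability of $W^\star$ from $R^\star$ together with the fact that $p \in \ter{f}$ and $W^\star \subseteq \ter{f}$ are mutually reachable: composing a path from each $r \in R^\star$ to $W^\star$ with a path inside $\ter{f}$ to $p$ (all via correct channels, all collapsed to single hops by transitivity) shows every member of $R^\star$ can reach $p$. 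A second point requiring care is that clocks advance unconditionally via the periodic block regardless of message activity, so the cutoffs $\cget$ and $\cset$ are finite values that the read-quorum clocks provably surpass; I would state monotonicity of $\clock$ (it only increases, at lines~\ref{code:qaf-complex:periodically} and~\ref{code:qaf-complex:write-quorum:inc-clock}) as a preliminary invariant. Assembling these, every \waituntil completes, so both functions return at every $p \in \tau(f)$, establishing $(\FS,\tau)$-wait-freedom.
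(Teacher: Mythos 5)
Your proposal is correct and follows essentially the same route as the paper's proof: invoke {\sf Availability} to obtain $W$ and $R$, use Proposition~\ref{proposition:f-avail-scc} to place $W$ inside $\ter{f}$ so that $p\in\tau(f)$ can exchange $\CLOCKREQ$/$\CLOCKRESP$ (and $\SETREQ$/$\SETRESP$) messages with $W$, and then rely on the periodic clock increments together with $f$-reachability of $W$ from $R$ to satisfy the $\GETRESP$ waits. The paper merely states the $\writequorum$ case is analogous and leaves the $R$-to-$p$ routing implicit in the transitivity assumption, whereas you spell both out explicitly -- a difference of detail, not of approach.
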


\begin{proof}
  We prove the liveness of $\readquorum()$; the case of $\writequorum()$ is
  analogous. Fix a failure pattern $f\in\FS$ and a process $p\in\tau(f)$ that
  executes $\readquorum()$. 
  By {\sf Availability}, there exist $W\in\WW$ and $R \in \RR$ such that 
  $W$ is $f$-available, and $W$ is $f$-reachable from $R$.
  By Proposition~\ref{proposition:f-avail-scc}, $W \subseteq \ter{f}$.
  Then since $\tau(f)=\ter{f}$ and $p \in \tau(f)$, 
  $p$ is strongly connected to $W$ via channels correct under $f$.
  Therefore, $p$ will
  be able to exchange the $\CLOCKREQ$ and $\CLOCKRESP$ messages with every
  member of $W$, thus exiting the wait at
  line~\ref{code:qaf-complex:receive-clock}. Recall that $W$ is $f$-reachable
  from $R$ and each process periodically increments its $\clock$ value and
  propagates it in a $\GETRESP$ message
  (line~\ref{code:qaf-complex:periodically}). Hence, $p$ will receive $\GETRESP$ 
  messages from all members of $R$ with high enough clock values to exit the
  wait at line~\ref{code:qaf-complex:read-quorum:wait-read-quorum} and return.
\end{proof}

\begin{figure}[t]
	\removelatexerror
	\begin{algorithm*}[H]
		\DontPrintSemicolon
		\SetAlgoNoLine
		\setcounter{AlgoLine}{0}

		$\mathcal{S}=\Val\times\mathsf{Version}$\quad \emph{// register
                  state type}\;

		\smallskip
		\smallskip
              
		\Function{$\owrite(x)$}{\label{code:reg:write}
			$\assign{S}{\readquorum()}$\;\label{code:reg:write:read-quorum}
			$\assign{(k,\_)}{\max\{s.\ts \mid s \in S\}}$\;\label{code:reg:t'}
			$\assign{t}{(k+1,i)}$\;\label{code:reg:t}
			\mbox{$\assign{\upd}{(\lambda s.\ }$\lIf{$t > s.\ts$}{\label{code:reg:write:upd}
					\return $(x,t)$} \lElse{\return $s)$}}\;
			$\writequorum(\upd)$\;
	    }
		
		\smallskip

		\Function{$\oread()$}{\label{code:reg:read}
			$\assign{S}{\readquorum()}$\;\label{code:reg:read:read-quorum}
			{\bf let $s' \in S$ be such that} $\forall s \in S.\ s'.\ts \geq s.\ts$\;\label{code:reg:read:t}
			\mbox{$\assign{\upd}{(\lambda s.\ }$\lIf{$s'.\ts > s.\ts$}{\label{code:reg:read:upd}
					\return $s'$} \lElse{\return $s)$}}\!\!\!\;
			$\writequorum(\upd)$\;\label{code:reg:read:write-quorum}
	        \return $s'.\val$\;
    }
	\end{algorithm*}
	\caption{The atomic register protocol at a process $p_i$.}
  \Description{The atomic register protocol at a process $p_i$.}
	\label{code:reg}
\end{figure}

\paragraph{Registers via quorum access functions.} In Figure~\ref{code:reg}
we give an implementation of an atomic register using the above quorum access
functions, which validates Theorem~\ref{thm:reg-upper}. The implementation
follows the structure of a multi-writer/multi-reader variant of
ABD~\cite{abd,rambo}: the main novelty of our protocol lies in the
implementation of the quorum access functions.

Let $\Val$ be the domain of values the register stores. Like ABD, our
implementation tags values with versions from
$\mathsf{Version}=\mathbb{N}\times\mathbb{N}$, ordered
lexicographically. A version is a pair of a monotonically increasing number and
a process identifier. Each register process maintains a state consisting of a
pair $(\val,\ts)$, where $\val$ is the most recent value written to the register
at this process and $\ts$ is its version.  Hence, we instantiate $\ST$ in
Figure~\ref{code:qaf-complex} to $\ST=\Val\times\mathsf{Version}$,
with $(0, 0)$ as the initial state.  
% !!!
% Note that versions are unrelated to the
% logical clocks in the implementation of quorum access functions.  
To execute a $\owrite(x)$ operation (line~\ref{code:reg:write} in
Figure~\ref{code:reg}), a process proceeds in two phases:

\begin{itemize}
\item {\bf Get phase.} The process uses $\readquorum()$ to collect the states
  from some read quorum. Based on these, it computes a unique version $t$,
  higher than every received one.
\item {\bf Set phase.} The process next uses $\writequorum()$ to store the pair
  $(x,t)$ at some write quorum. To this end, it passes as an argument a function
  $\upd$ that describes how each member of the write quorum should update its state
  (expressed using $\lambda$-notation,
  line~\ref{code:reg:write:upd}). Given a state
  $s$ of a write-quorum member, the function acts as follows: if the new
  version $t$ is higher than the old version $s.\ts$, then the function
  returns a state with the new value $x$ and version $t$; otherwise it returns
  the unchanged state $s$. Recall that the implementation of $\writequorum()$
  uses the result of this function to replace the states at the members of a
  write quorum.
\end{itemize}

To execute a $\oread()$ operation (line~\ref{code:reg:read}), a process follows
two similar phases:

\begin{itemize}
\item {\bf Get phase.} The process uses $\readquorum()$ to collect the states
  from all members of some read quorum. It then picks the state $s'$ with the
  largest version among those received. The value part $s'.\val$ of this state
  will be returned as a response to the $\oread()$.
\item {\bf Set phase.} Before returning from $\oread()$, the process must
  guarantee that the value read will be seen by any subsequent operation. To
  this end, it writes the state $s'$ back using similar steps to the {\sf Set
  phase} for the $\owrite()$ operation.
\end{itemize}

{\sf Liveness} of the quorum access functions trivially implies that the
protocol in Figure~\ref{code:reg} is $(\FS,\tau)$-wait-free. The {\sf Real-time
ordering} property can be used to show that the protocol is
linearizable~\cite{linearizability}; we defer the proof to
\tr{\ref{sec:reg-safety}}{\regsafety}.

%%% Local Variables:
%%% mode: latex
%%% TeX-master: "main.tex"
%%% End:

\section{Lower Bound for Lattice Agreement}
\label{sec:lower-bound}

In this section we prove Theorem~\ref{thm:reg-lower} for lattice agreement. Then
the existing constructions of lattice agreement from snapshots and
registers~\cite{hagit-snapshots,hagit-lattice-agreement} imply that the theorem
holds also for the latter objects. Recall that in the lattice agreement
problem, each process $p_i$ may invoke an operation $\propose(x_i)$ with its
input value $x_i$. This invocation terminates with an output value $y_i$. Both input
and output values are elements of a semi-lattice with a partial order $\leq$ and
a join operation $\bigsqcup$. An algorithm that solves lattice agreement must
satisfy the following conditions for all $i$ and $j$:
\begin{itemize}  
\item \textsf{Comparability.} Either $y_i \leq y_j$ or $y_j \leq y_i$.
\item \textsf{Downward validity.} If process $p_i$ outputs $y_i$, then
  $x_i \leq y_i$.
\item \textsf{Upward validity.} If process $p_i$ outputs $y_i$, then
  $y_i \leq \bigsqcup X$, where $X$ is the set of $x_j$ for which
  $\text{propose}(x_j)$ was invoked.
\end{itemize}

We rely on the following lemma, which establishes that processes where
obstruction-freedom holds must be strongly connected by correct channels.
We defer its proof to \tr{\ref{sec:app-lattice-lower}}{\connected}.
\begin{lemma}
\label{lemma:lattice-agreement-lower-1}
Let $f$ be a failure pattern and $T \subseteq \PP$.
If some algorithm $\ALG$ is an $(f,T)$-obstruction-free implementation
of lattice agreement, then $T$ is strongly connected in $\RG{\GG}{f}$.
\end{lemma}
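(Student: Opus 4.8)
The plan is to prove Lemma~\ref{lemma:lattice-agreement-lower-1} by contradiction: suppose $T$ is not strongly connected in $\RG{\GG}{f}$, so there exist two processes $p, q \in T$ such that $q$ is \emph{not} reachable from $p$ via channels correct under $f$. I would then partition the processes into the set $A$ reachable from $p$ (including $p$) and its complement $B = \PP \setminus A$, noting that $q \in B$ and that every channel from $A$ to $B$ that carries messages must be faulty under $f$ (otherwise $B$ would contain a process reachable from $p$). The core idea is an indistinguishability/partitioning argument familiar from FLP-style and CAP-style impossibility results: we will build two $f$-compliant fair executions in which $p$ and $q$ each run solo (so obstruction-freedom forces termination) but cannot learn each other's proposals, and then merge them to violate \textsf{Comparability}.

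The key steps, in order, would be as follows. First, choose two incomparable lattice values $x_p$ and $x_q$ (elements with neither $x_p \le x_q$ nor $x_q \le x_p$); such elements exist in any nontrivial semi-lattice, and I would handle the degenerate totally-ordered case separately or assume the lattice is rich enough. Second, construct execution $\sigma_p$ in which only $p$ invokes $\propose(x_p)$, all processes in $B$ take no meaningful steps (or crash, if allowed), and the channels from $A$ to $B$ are exactly those disconnected under $f$; since $p \in T$ and its operation eventually executes solo, $(f,T)$-obstruction-freedom forces $p$ to output some $y_p$, and by \textsf{Downward validity} $x_p \le y_p$. Symmetrically construct $\sigma_q$ where $q$ invokes $\propose(x_q)$ and outputs $y_q$ with $x_q \le y_q$. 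Third, build a merged execution $\sigma$ that is $f$-compliant and in which $p$'s local view is identical to its view in $\sigma_p$ up to the point it outputs $y_p$, while $q$'s local view is identical to its view in $\sigma_q$ up to the point it outputs $y_q$. The crucial observation making this merge valid is that in $\sigma$ the only information that could flow from $A$ to $B$ travels on channels that are faulty under $f$ and hence may drop all messages, so $q$'s execution is unaffected by $p$'s proposal and vice versa — the two halves are causally independent.

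In $\sigma$ both $p$ and $q$ must still output the same values $y_p$ and $y_q$ by indistinguishability, so we need $y_p \le y_q$ or $y_q \le y_p$ by \textsf{Comparability}; combined with \textsf{Upward validity}, which bounds each output by the join of the \emph{actually invoked} proposals, I would derive that $y_p \le \bigsqcup\{x_p, x_q\}$ constraints alone do not immediately contradict comparability, so the argument must instead force a genuine clash. The cleanest route is to pick the lattice so that $y_p$ must be $\ge x_p$ but $\le x_p$ (if only $x_p$ is proposed in $p$'s causal past), forcing $y_p = x_p$ and symmetrically $y_q = x_q$; then \textsf{Comparability} demands $x_p$ and $x_q$ be comparable, contradicting their choice. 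The main obstacle, and the step demanding the most care, is the \emph{merge construction itself}: I must verify that the spliced execution $\sigma$ is genuinely $f$-compliant (the faulty channel set used is contained in $C$), genuinely \emph{fair} (every correct process takes infinitely many steps, which requires scheduling tails for both halves), and that each operation indeed \emph{eventually executes solo} in $\sigma$ so that obstruction-freedom applies — reconciling "solo execution" with a schedule that interleaves activity in both $A$ and $B$ is the delicate point, likely resolved by making the concurrent operations belong to processes outside $T$ or by staggering the two halves so each reaches its solo suffix.
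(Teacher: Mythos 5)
There is a genuine gap at the heart of your construction, and it sits exactly at the step you flagged as delicate. Your partition takes $A$ to be the set of processes reachable \emph{from} $p$ and $B=\PP\setminus A$; this only guarantees that channels from $A$ to $B$ are faulty. Channels from $B$ to $A$ may be correct under $f$, so your claim that the two halves are causally independent ``and vice versa'' is false: information can flow from $q$'s side into $p$'s side. This cannot be patched by silencing $B$. You may not crash its members ($q\in T$ is correct under $f$, and an $f$-compliant execution only lets processes in the pattern's crash set fail), and having them take ``no meaningful steps'' destroys fairness --- but $(f,T)$-obstruction-freedom applies only to \emph{fair} executions, in which every message a correct process sends on a correct $B\to A$ channel must eventually be delivered. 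Consequently there may be no fair $f$-compliant execution at all in which $p$ decides without input from $B$: a perfectly legitimate algorithm can have $p$ wait for a message from $q$ (this wait terminates whenever a correct $q\to p$ path exists) and join $q$'s value into its output, making $p$'s output comparable with $x_q$ in any merged run. So your $\sigma_p$ need not exist, and the merge collapses; the asymmetry of unidirectional connectivity is precisely what defeats the symmetric FLP/CAP-style splice.

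The paper's proof avoids this by closing \emph{each} side under ``can reach'': $u$'s side is $S_u\cup R_u$, where $S_u$ is $u$'s strongly connected component and $R_u$ is everything that can reach $u$ in $\RG{\GG}{f}$, so that every channel \emph{into} that side is faulty and $u$'s decision is generated entirely within it (and symmetrically for $v$). The price is that the two sides are no longer disjoint --- $R_u$ may intersect $R_v\cup S_v$ --- and this overlap is handled by a common-prefix replay: the second execution is constructed to begin with exactly the steps $\alpha_3|_{R_u}$ of the shared feeders, which contain no $\propose$ invocations, so in the merged finite run $\sigma=\alpha_3|_{R_u}\,\delta\,\alpha_3|_{S_u}$ the feeders take a single consistent sequence of steps, $v$'s side sees exactly $\beta$, and $u$'s component is scheduled last with the extra messages simply left in flight. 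Two further corrections to your plan: the merged $\sigma$ need not be fair, $f$-compliant in any particular way, or support solo execution --- it is a \emph{finite} execution violating {\sf Comparability}, a safety property, so obstruction-freedom is invoked only in the two source executions and your worry about restoring fairness and solo-ness in $\sigma$ is moot; and {\sf Upward validity} should likewise be applied in those source executions (where a single value is proposed, forcing $y_p=x_p$ and $y_q=x_q$), with indistinguishability then carrying these outputs into $\sigma$, rather than reasoning about ``causal pasts'' inside $\sigma$ itself.
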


\begin{figure}[t]
    \centering
    \includegraphics[width=0.4\linewidth,trim=600 300 600 300,clip]{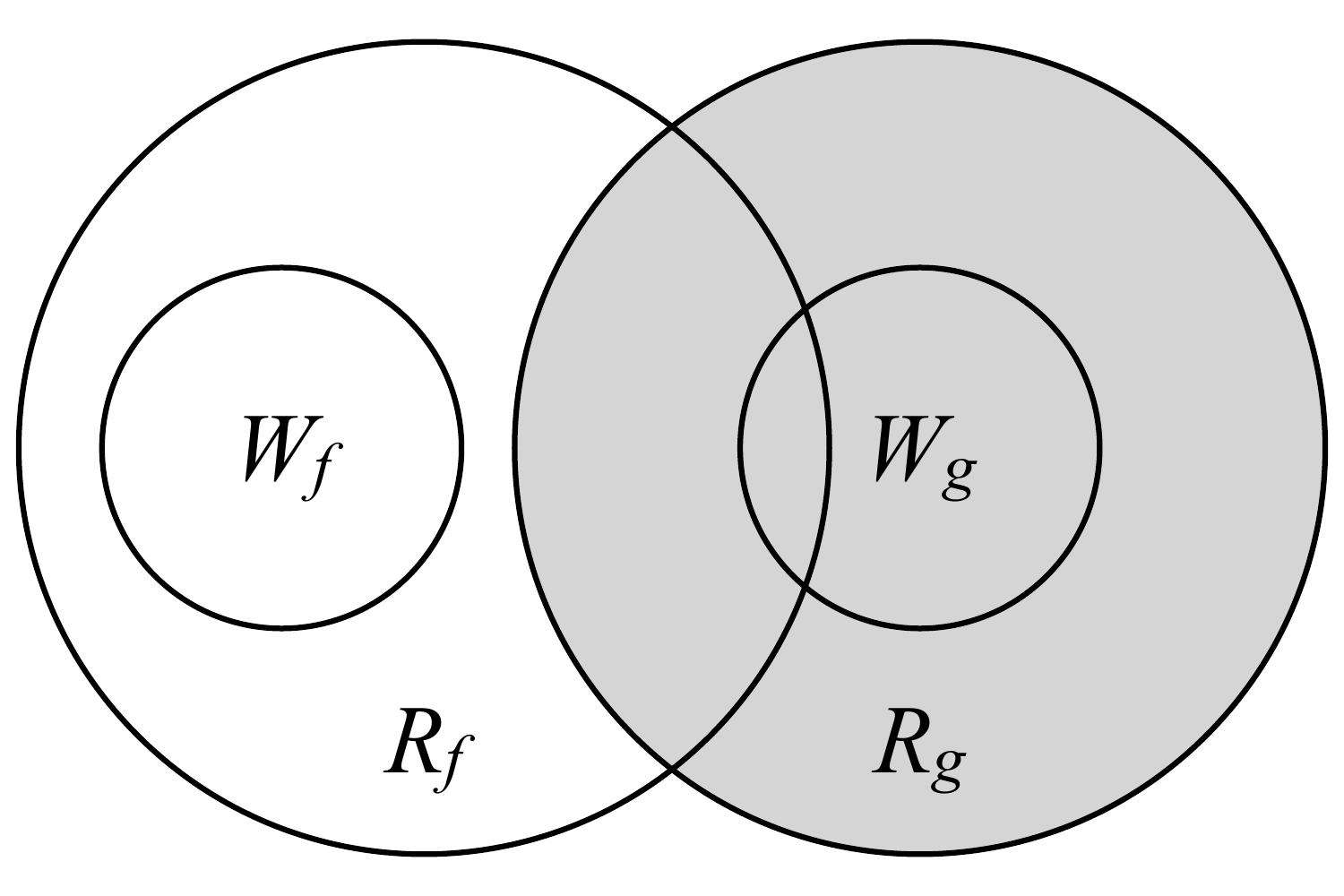}
    \caption{Illustration of the sets $W_k$ and $R_k$ for $k\in\{f,g\}$.}
    \Description{Illustration of the sets $W_k$ and $R_k$ for $k\in\{f,g\}$.}
    \label{fig:lattice-agreement-lower-1}
\end{figure}

\begin{proof}[Proof of Theorem~\ref{thm:reg-lower} for lattice agreement.]
Let $\ALG$ be an $(\FS,\tau)$-obstruction-free implementation of lattice
agreement.
For a failure pattern $f\in\FS$, Lemma~\ref{lemma:lattice-agreement-lower-1} 
implies that the set $\tau(f)$ of processes where obstruction-freedom holds 
must be transitively connected via correct channels.
Let then $W_f$ be the strongly connected component of $\GG\setminus f$
containing $\tau(f)$ and $R_f$ be the set of processes that can reach $W_f$ in 
$\GG\setminus f$, including $W_f$ itself.
Finally, let $\RR=\{R_f \mid f\in\FS\}$ and $\WW=\{W_f \mid f\in\FS\}$.

We show that $(\FS,\RR,\WW)$ is a generalized quorum system.
Assume by contradiction that this is not the case.
Note that for every $f\in\FS$ we have that $W_f$ is $f$-available and
$W_f$ is $f$-reachable from $R_f$.
Thus, $(\FS,\RR,\WW)$ satisfies {\sf Availability}.
Since by assumption $(\FS,\RR,\WW)$ is not a generalized quorum system,
then it must fail to satisfy {\sf Consistency}.
Hence, there exist $f,g\in\FS$ such that $W_f\cap R_g=\emptyset$.
Refer to Figure~\ref{fig:lattice-agreement-lower-1} for a visual depiction.

\setcounter{myclaim}{0}
\begin{myclaim}
\label{claim:lattice-lower-1}
For $k\in\{f,g\}$, $R_k$ is unreachable from $\PP\setminus R_k$ in $\GG\setminus k$.
\end{myclaim}
\begin{myclaim}
\label{claim:lattice-lower-2}
For $k\in\{f,g\}$, $R_k\setminus W_k$ is unreachable from $W_k$ in $\GG\setminus k$.
\end{myclaim}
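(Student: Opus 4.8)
The plan is to argue by contradiction, exploiting the fact that $W_k$ is a \emph{maximal} strongly connected component of $\GG\setminus k$. Suppose, contrary to the claim, that some process $p \in R_k \setminus W_k$ is reachable from $W_k$ in $\GG\setminus k$; concretely, there is a directed path in $\GG\setminus k$ from some $w \in W_k$ to $p$. I want to show this forces $p \in W_k$, contradicting $p \notin W_k$.

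First I would record the two reachability facts that the hypotheses supply. From the assumption we get that $w$ reaches $p$. Conversely, since $p \in R_k$, the defining property of $R_k$ (the set of processes that can reach $W_k$) gives a path from $p$ to some member $w' \in W_k$ in $\GG\setminus k$; and because $W_k$ is strongly connected, $w'$ reaches $w$. Concatenating a path from $p$ to $w'$ with a path from $w'$ to $w$ shows that $p$ reaches $w$.

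Next I would combine these to close the argument. The path from $w$ to $p$ and the path from $p$ to $w$ together witness that $w$ and $p$ are mutually reachable in $\GG\setminus k$, so they lie in the same strongly connected component. Since $W_k$ is precisely the strongly connected component containing $w$, and such components are maximal, it follows that $p \in W_k$ --- the required contradiction. This rules out the assumed path and establishes the claim; the reasoning is verbatim the same for $k = f$ and $k = g$, so treating one value of $k$ suffices.

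I expect no serious obstacle here: the statement is essentially the observation that the ``upstream'' set $R_k \setminus W_k$, whose members only feed \emph{into} the sink component $W_k$, receives nothing back from $W_k$, which is just maximality of the SCC combined with the defining reachability property of $R_k$. The argument is self-contained and does not need Claim~\ref{claim:lattice-lower-1}. The only points requiring care are to read ``reachable from $W_k$'' as reachability from some (equivalently, by strong connectivity of $W_k$, every) member of $W_k$, and to invoke the \emph{maximality} of the component $W_k$ rather than merely its strong connectivity --- it is maximality alone that upgrades ``$p$ is mutually reachable with a member of $W_k$'' to ``$p \in W_k$''.
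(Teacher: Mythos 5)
Your proof is correct: the paper states this claim without proof, treating it as immediate, and your argument is precisely the intended justification. Combining the defining property of $R_k$ (every member reaches $W_k$, hence by strong connectivity of $W_k$ reaches any chosen $w\in W_k$) with the \emph{maximality} of the strongly connected component $W_k$ — so that mutual reachability with $w$ forces membership in $W_k$ — is exactly the right two-step closure, and your observation that maximality, not mere strong connectivity, is what does the work is the correct point of care.
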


Let $\mathcal{L}$ be a semi-lattice with partial order $\leq$ such that 
$x_1, x_2 \in \mathcal{L}$, $x_1 \nleq x_2$ and $x_2 \nleq x_1$.
Let $\alpha_1$ be a fair execution of $\ALG$ 
where the processes and channels in $f$ fail at the beginning, 
a process $p_1\in \tau(f)$ invokes $\propose(x_1)$,
and no other operation is invoked in $\alpha_1$. 
Because $p_1\in \tau(f)$ and $\ALG$ is $(f,\tau(f))$-obstruction-free, 
the $\propose(x_1)$ operation must eventually terminate with a 
return value $y_1$.
By {\sf Downward validity}, $x_1 \leq y_1$.
By {\sf Upward validity}, since no other $\propose()$ was invoked, $y_1 \leq x_1$.
Thus, $y_1 = x_1$. 
Let $\alpha_2$ be the prefix of $\alpha_1$
ending with the response to $\propose()$.
By Claim~\ref{claim:lattice-lower-1}, $R_f$ is unreachable from 
$\PP\setminus R_f$, and thus, 
the actions by processes in $R_f$ do not depend
on those by processes in $\PP\setminus R_f$. 
Then $\alpha_3=\alpha_2|_{R_f}$, the projection of $\alpha_2$
to processes in $R_f$, is an execution of $\ALG$.
By Claim~\ref{claim:lattice-lower-2}, $R_f\setminus W_f$ is unreachable
from $W_f$, and thus, the actions by processes in $R_f\setminus W_f$
do not depend on those by processes in $W_f$. 
Then $\alpha=\alpha_3|_{R_f\setminus W_f}\alpha_3|_{W_f}$
is an execution of $\ALG$.
Finally, $\alpha_3|_{R_f\setminus W_f}$ 
contains no $\propose$ invocations (only startup steps).

Let $\beta_1$ be a fair execution of $\ALG$ that starts
with all the actions from $\alpha_3|_{R_f\setminus W_f}$,
followed by the failure of all processes and channels in $g$,
followed by a $\propose(x_2)$ invocation by a process $p_2\in \tau(g)$.
Because $p_2\in \tau(g)$ and $\ALG$ is $(g,\tau(g))$-obstruction-free, 
the $\propose(x_2)$ operation must eventually terminate
with a return value $y_2$.
By {\sf Downward validity}, $x_2 \leq y_2$.
By {\sf Upward validity}, since no other $\propose()$ was invoked, $y_2 \leq x_2$.
Thus, $y_2=x_2$.
Let $\beta_2$ be the prefix of $\beta_1$ ending with the response to
$\propose()$ and let $\delta$ be the suffix of $\beta_2$ such that
$\beta_2=\alpha_3|_{R_f\setminus W_f}\delta$.
By Claim~\ref{claim:lattice-lower-1}, $R_g$ is unreachable from 
$\PP\setminus R_g$, and 
thus, the actions in $\delta$ by processes in $R_g$ do not depend
on those by processes in $\PP\setminus R_g$.
Then, $\beta=\alpha_3|_{R_f\setminus W_f}\delta|_{R_g}$
is an execution of $\ALG$.

Consider the execution $\sigma=\alpha_3|_{R_f\setminus W_f}\alpha_3|_{W_f}\delta|_{R_g}$ 
where no process or channel fails. We have:
\begin{gather*}
\sigma|_{R_f\setminus R_g}=
(\alpha_3|_{R_f\setminus W_f}\alpha_3|_{W_f}\delta|_{R_g})|_{R_f\setminus R_g}
={}
  \\
  (\alpha_3|_{R_f\setminus W_f}\alpha_3|_{W_f})|_{R_f\setminus R_g}
=
\alpha|_{R_f\setminus R_g}.
\end{gather*}
Thus, $\sigma$ is indistinguishable from $\alpha$ 
to the processes in $R_f\setminus R_g$.
Also, because $W_f\cap R_g=\emptyset$, we have:
$$
\sigma|_{R_g} =
(\alpha_3|_{R_f\setminus W_f}\alpha_3|_{W_f}\delta|_{R_g})|_{R_g} =
(\alpha_3|_{R_f\setminus W_f}\delta|_{R_g})|_{R_g} =
\beta|_{R_g}.
$$
Thus, $\sigma$ is indistinguishable from $\beta$ to the processes in $R_g$.
Finally, $\sigma|_{\PP\setminus(R_f\cup R_g)}=\varepsilon$. Thus, for every process, 
$\sigma$ is indistinguishable to this process from some execution of $\ALG$. 
Furthermore, each message received by a process in $\sigma$ has previously been sent 
by another process. Therefore, $\sigma$ is an execution of $\ALG$. However, in this 
execution $p_1$ decides $x_1$, $p_2$ decides $x_2$, and $x_1, x_2$ are incomparable 
in $\mathcal{L}$. This contradicts the {\sf Comparability} property of lattice agreement.
The contradiction derives from assuming that $(\FS,\RR,\WW)$ is not a generalized quorum 
system, so the required follows. Finally, for each $f\in\FS$ we have $\tau(f)\subseteq 
W_f\subseteq \ter{f}$.
\end{proof}

%%% Local Variables:
%%% mode: latex
%%% TeX-master: "main.tex"
%%% End:

\section{Tight Bound for Consensus}
\label{sec:consensus}

We now move from the asynchronous model we have used so far to the {\em
partially synchronous model}~\cite{dls}. We show that, in this model, the
existence of a generalized quorum system is also a tight bound on the process
and channel failures that can be tolerated by any implementation of consensus. The
partially synchronous model assumes the existence of a \emph{global
stabilization time} ($\GST$) and a bound $\delta$ such that after $\GST$,
every message sent by a correct process on a correct channel is received within
$\delta$ time units of its transmission. Messages sent before $\GST$ may
experience arbitrary delays. Additionally, the model assumes that processes have
clocks that may drift unboundedly before $\GST$, but do not drift
thereafter. Both $\GST$ and $\delta$ are a priori unknown.

Let $\mathsf{Value}$ be an arbitrary domain of values. The consensus object
provides a single operation $\propose(x)$, $x \in \mathsf{Value}$, which returns
a value in $\mathsf{Value}$. The object has to satisfy the standard safety
properties: all terminating $\opropose()$ invocations must return the same value
({\sf Agreement}); and $\opropose()$ can only return a value passed to some
$\opropose()$ invocation ({\sf Validity}). The following theorems are analogs of
Theorems~\ref{thm:reg-upper} and~\ref{thm:reg-lower} for consensus.
\begin{theorem}
\label{thm:consensus-upper}
Let $(\FS,\RR,\WW)$ be a generalized quorum system and
$\tau : \FS \rightarrow 2^{\PP}$ be the termination mapping such that for each
$f\in\FS$, $\tau(f)=\ter{f}$.
Then there exists an $(\FS,\tau)$-wait-free implementation of consensus.
\end{theorem}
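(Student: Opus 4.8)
The plan is to build a Paxos-style consensus algorithm that runs on top of the quorum access functions developed in Section~\ref{sec:upper-bound}, exploiting the fact that these functions already provide $(\FS,\tau)$-wait-free access to read and write quorums despite the weak connectivity of generalized quorum systems. Recall that in Paxos the phase-1 (prepare) quorum plays the role of a read quorum and the phase-2 (accept) quorum plays the role of a write quorum, and {\sf Consistency} of the quorum system guarantees their intersection, which is exactly what makes decisions unique. The key observation motivating this approach is the one stated in the introduction: in the partially synchronous model a process can use the eventual timeliness of the network to determine when the information it receives is up to date, so the clock-based machinery of the quorum access functions is well-suited to establishing the recency guarantees Paxos needs for its ballots.

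First I would fix the generalized quorum system $(\FS,\RR,\WW)$ and the termination mapping $\tau(f)=\ter{f}$, and describe the per-process state: the state set $\ST$ passed to the quorum access functions would store the highest ballot a process has seen together with the highest accepted proposal (ballot, value) pair, mirroring standard Paxos acceptor state. A leader attempting to decide in a ballot $b$ would use $\readquorum()$ to gather the accepted states from a read quorum (phase 1), pick the value associated with the highest accepted ballot among them (or its own proposal if none exists), and then use $\writequorum(\upd)$ with an update function $\upd$ that records this (b, value) pair at a write quorum (phase 2), finally deciding the value. The {\sf Real-time ordering} and {\sf Validity} properties of the quorum access functions, combined with {\sf Consistency}, would yield the {\sf Agreement} invariant by the classical Paxos argument: any write quorum at ballot $b$ intersects any read quorum at a later ballot $b' > b$, so a decided value propagates forward and cannot be overwritten by a conflicting value.

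Second I would address liveness and termination. {\sf Validity} of consensus is immediate since proposed values only ever originate from $\opropose()$ arguments. For termination I would rely on partial synchrony to install, after $\GST$, a single stable leader chosen from within $\ter{f}$ using a failure-detector or rotating-coordinator mechanism, following the classical recipe of~\cite{dls,paxos}. Because {\sf Liveness} of the quorum access functions (Theorem~\ref{thm:qaf-liveness}) guarantees that $\readquorum()$ and $\writequorum()$ terminate at every process in $\tau(f)=\ter{f}$, a leader in $\ter{f}$ that eventually runs a sufficiently high ballot without interruption will complete both phases and decide, and its decision will then be disseminated to all of $\ter{f}$.

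The main obstacle I expect is ensuring that a leader can reliably pick a ballot number that is guaranteed to exceed all previously used ballots, and that it learns when it may safely proceed uninterrupted, given that the weak connectivity of read quorums prevents the usual request/response interaction. The quorum access functions hide the asynchrony of pulling state from a read quorum behind their clock cutoff mechanism, but consensus additionally needs a leader to detect competing higher ballots so as to back off and retry; here I would use the clocks maintained by the quorum access functions together with the eventual $\delta$-timeliness after $\GST$ to bound retries, arguing that after some stable ballot the leader's phase-1 read quorum reflects all prior phases and no higher ballot is concurrently active. Establishing this stabilization argument carefully — tying the logical clocks of the quorum access layer to the real-time $\GST$ guarantees — is the technically delicate part, and would occupy the bulk of the proof.
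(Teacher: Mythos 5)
There is a genuine gap, and it is in the safety argument, not the liveness one. Your phase~1 is a pure $\readquorum()$: it collects states but installs nothing at the processes it reads from. Classical Paxos agreement hinges on the \emph{promise}: an acceptor that answers a phase-1 request for ballot $b'$ must thereafter refuse to accept any ballot $b < b'$. Without it, the standard intersection argument collapses in the concurrent case: a leader at $b' > b$ can complete its $\readquorum()$ \emph{while} the phase-2 $\writequorum$ of ballot $b$ is still in flight, see no accepted value, choose a conflicting $v'$, and both phase-2 writes then complete at (possibly disjoint) write quorums --- in a generalized quorum system only read--write intersection is guaranteed, not write--write. The {\sf Real-time ordering} property you invoke only orders a \emph{completed} $\writequorum$ before a \emph{later} $\readquorum$; it says nothing about this concurrent race, which is precisely what promises exist to resolve. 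Worse, the gap cannot be patched within the quorum-access-function interface: installing a promise means mutating state at the members of a \emph{read} quorum, but $\writequorum$ targets write quorums (writing the promise there does not help, since the old ballot's phase-2 write quorum need not intersect it), and under a GQS some read-quorum members may have no correct incoming channels at all --- this is exactly the obstacle that makes the request/response pattern inapplicable. Your closing suggestion to handle competing ballots ``using the clocks together with eventual $\delta$-timeliness'' is also a red flag: agreement must hold unconditionally, including before $\GST$, so no timing argument can substitute for the missing promise.

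The paper's proof takes a different and, notably, more elementary route: it does \emph{not} reuse the quorum access functions (its point is that consensus is \emph{simpler} than registers under partial synchrony and needs none of the logical-clock machinery). It gives a direct Paxos-like protocol driven by a view synchronizer with round-robin leaders and growing timeouts $v \cdot C$. The promise is implicit and purely local: view entry is triggered by a local timer (so read-quorum members need no incoming connectivity), each process \emph{pushes} its $\OB$ message to the leader unsolicited upon entering a view, and acceptors only handle $\TA$ messages for their current view --- view monotonicity is the promise, and {\sf Consistency} then gives {\sf Agreement} by the usual argument. Liveness follows from the overlap proposition: after $\GST$ the proposer $p \in \ter{f}$ eventually leads a view in which all correct processes overlap for $3\delta$, long enough for $R \to p \to W \to p$ given that $R$ $f$-reaches $W \ni p$ and $p$ is strongly connected to $W$. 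If you want to keep your layered approach, the sound variant is to build on the \emph{atomic registers} of Theorem~\ref{thm:reg-upper} (whose linearizability hides the read-quorum asymmetry) and run a known obstruction-free shared-memory Paxos over them, with timer-based backoff for termination --- but Paxos over the raw access functions, as you describe it, is unsafe.
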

\begin{theorem}
\label{thm:consensus-lower}
Let $\mathcal{F}$ be a fail-prone system and $\tau : \FS \rightarrow 2^{\PP}$ be
the termination mapping such that for each $f \in \FS$, $\tau(f)\neq\emptyset$.
If there exists an $(\FS,\tau)$-obstruction-free implementation of consensus,
then there exist $\RR$ and $\WW$ such that $(\FS,\RR,\WW)$ is a
generalized quorum system. Moreover, for each $f\in\FS$, we have $\tau(f)\subseteq \ter{f}$.
\end{theorem}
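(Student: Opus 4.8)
The plan is to mirror the structure of the lower-bound proof for lattice agreement (the proof of Theorem~\ref{thm:reg-lower}) as closely as possible, since consensus is strictly harder than lattice agreement in the sense that \textsf{Agreement} forces all terminating invocations to return \emph{the same} value, whereas lattice agreement only requires comparability. First I would reuse Lemma~\ref{lemma:lattice-agreement-lower-1} in its consensus-adapted form: namely, that if $\ALG$ is an $(f,T)$-obstruction-free implementation of consensus, then $T$ is strongly connected in $\GG\setminus f$. The same partitioning argument should go through, because an obstruction-free consensus implementation can also be driven to a decision when run solo; I expect the proof of this analog to be essentially identical to the one deferred to \tr{\ref{sec:app-lattice-lower}}{\connected}, and I would either restate it or invoke it directly. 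Given this, for each $f\in\FS$ I define $W_f$ to be the strongly connected component of $\GG\setminus f$ containing $\tau(f)$, and $R_f$ to be the set of processes that can reach $W_f$ in $\GG\setminus f$ (including $W_f$ itself), then set $\RR=\{R_f\mid f\in\FS\}$ and $\WW=\{W_f\mid f\in\FS\}$. By construction each $W_f$ is $f$-available and $f$-reachable from $R_f$, so \textsf{Availability} holds automatically, and the two combinatorial facts (Claim~\ref{claim:lattice-lower-1} and Claim~\ref{claim:lattice-lower-2}) about unreachability of $R_k$ from $\PP\setminus R_k$ and of $R_k\setminus W_k$ from $W_k$ carry over verbatim, since they depend only on the graph-theoretic definitions of $R_k$ and $W_k$, not on the object being implemented.

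The core of the argument is again a proof by contradiction: assuming $(\FS,\RR,\WW)$ fails \textsf{Consistency}, there exist $f,g\in\FS$ with $W_f\cap R_g=\emptyset$, and I would build three executions and splice them into a single valid execution $\sigma$ that violates safety. The construction follows the lattice-agreement template: run $\alpha_1$ where the failures of $f$ occur first and $p_1\in\tau(f)$ proposes $x_1$ (forcing a decision by obstruction-freedom), project to $R_f$ and then reorder so the $R_f\setminus W_f$ prefix comes before the $W_f$ steps; run $\beta_1$ starting with that same $R_f\setminus W_f$ prefix, followed by the failures of $g$ and a proposal $x_2$ by $p_2\in\tau(g)$; and finally compose $\sigma=\alpha_3|_{R_f\setminus W_f}\,\alpha_3|_{W_f}\,\delta|_{R_g}$ in which nobody fails. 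The indistinguishability computations $\sigma|_{R_f\setminus R_g}=\alpha|_{R_f\setminus R_g}$, $\sigma|_{R_g}=\beta|_{R_g}$ (using $W_f\cap R_g=\emptyset$), and $\sigma|_{\PP\setminus(R_f\cup R_g)}=\varepsilon$ are purely a matter of projections and transfer unchanged. The one substantive difference from the lattice-agreement proof is the final contradiction: here I choose two distinct values $x_1\neq x_2$ and observe that in $\sigma$, process $p_1$ decides $x_1$ and $p_2$ decides $x_2$ by \textsf{Validity} (a solo proposal can only return its own input), contradicting \textsf{Agreement}. Thus \textsf{Consistency} must hold, giving the generalized quorum system, and $\tau(f)\subseteq W_f\subseteq\ter{f}$ follows exactly as before.

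The step I expect to be the main obstacle is justifying the indistinguishability splicing under the partially synchronous model rather than the asynchronous one. In the asynchronous lattice-agreement proof, one can freely schedule failures ``at the beginning'' and reorder projected sub-executions because message delays are unbounded and there is no timing contract to respect. Under partial synchrony, the adversary's power is subtler: after $\GST$ correct channels must deliver within $\delta$, and clocks stop drifting. I must therefore place all the interesting activity (the failures, the solo proposals, and the forced decisions) \emph{before} $\GST$, exploiting that $\GST$ and $\delta$ are a priori unknown and that pre-$\GST$ delays are arbitrary. Concretely, I would argue that each of $\alpha_1$ and $\beta_1$ can be taken to reach its decision before its respective (unbounded) $\GST$, so that in the spliced execution $\sigma$ all the delays used to realize the desired interleaving are legal pre-$\GST$ delays; fairness and eventual synchrony are only needed to guarantee that the obstruction-free solo proposals terminate, which they do regardless of when $\GST$ falls. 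The remaining care is to confirm that $\sigma$ itself can be extended to a fair, $\emptyset$-compliant (no failures) partially synchronous execution, so that every message received in $\sigma$ was previously sent and the timing constraints can be met after a suitably late $\GST$; once this is in place the \textsf{Agreement} violation is immediate and the theorem follows.
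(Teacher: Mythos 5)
Your proof is correct, but it takes a genuinely different route from the paper. The paper does not redo the splicing argument for consensus at all: it observes that the execution $\sigma$ constructed in the lattice-agreement lower bound (\S\ref{sec:lower-bound}) is finite, hence remains a valid execution under partial synchrony with all actions placed before $\GST$; it then transfers the bound to consensus purely by black-box reduction -- consensus implements a MWMR atomic register, registers implement snapshots, and snapshots implement lattice agreement, so an $(\FS,\tau)$-obstruction-free consensus would yield an $(\FS,\tau)$-obstruction-free lattice agreement, to which Theorem~\ref{thm:reg-lower} applies (this also yields $\tau(f)\subseteq\ter{f}$). You instead re-run the entire indistinguishability construction directly for consensus, with a consensus analog of Lemma~\ref{lemma:lattice-agreement-lower-1} and with the final contradiction switched from \textsf{Comparability} to \textsf{Agreement} (using \textsf{Validity} to pin down the solo decision values, where the paper uses \textsf{Downward}/\textsf{Upward validity}). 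Both are sound: your direct argument is self-contained and sidesteps the reduction chain -- in particular the implicit claims that the register-to-snapshot-to-lattice-agreement constructions preserve $(\FS,\tau)$-obstruction-freedom and that (multi-instance) consensus implements a register under this liveness notion -- while the paper's route is much shorter given that the lattice-agreement proof already exists, reusing it verbatim instead of duplicating the claims and splicing computations. Your treatment of the partial-synchrony subtlety is more careful than strictly needed but matches the paper's one-line justification; note that you need not arrange for $\alpha_1$ and $\beta_1$ to decide before their own $\GST$s -- since pre-$\GST$ constraints (arbitrary delays, unbounded clock drift) are strictly weaker than post-$\GST$ ones, any finite prefix can be re-timed in the spliced $\sigma$ to occur before a sufficiently late $\GST$ while preserving indistinguishability, which is exactly the observation the paper makes.
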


We first present a consensus protocol validating
Theorem~\ref{thm:consensus-upper}. To this end, we fix a generalized quorum system
$(\FS,\RR,\WW)$ and a termination mapping $\tau : \FS \rightarrow 2^{\PP}$ such
that $\tau(f)=\ter{f}$ holds for each $f \in \FS$. As in
\S\ref{sec:upper-bound}, we assume without loss of generality that the
connectivity relation of the graph $\GG\setminus f$ is transitive for each
$f \in \FS$. Our protocol for consensus is shown in
Figure~\ref{code:css}.

\begin{figure}[t]
\removelatexerror%
	\begin{algorithm*}[H]
		\DontPrintSemicolon
		\SetAlgoNoLine
		\setcounter{AlgoLine}{0}

		\assign{\view,\cview}{0,0}\;
		\assign{\val,\pval}{\bot,\bot}\;
		$\phase\in\{\entered,\proposed,
			\accepted,\decided\}$\;

		\smallskip\smallskip

		\Function{$\opropose(x)$}{
				$\assign{\pval}{x}$\;\label{css:propose:myval}
			\textbf{async wait until} $\phase = \decided$\;\label{css:propose:wait}
			\textbf{return}\xspace $\val$\;
		}

		\smallskip\smallskip

		\SubAlgo{\mbox{\onreceive $\{\OB(\view, v_j, x_j) \mid p_j\in R\}$
                  {\bf from some} $R \in \RR$}}{\label{css:ob}
			\precond $\phase = \entered$\;\label{css:ob:pre}
			\uIf{$\forall p_j \in R .\ x_j = \bot$}{\label{css:ob:if}
				\lIf{$\pval = \bot$}{\textbf{return}}\label{css:ob:if:no-val}
				\send $\TA(\view, \pval)$ \ToAll\;\label{css:ob:if:send}
			}\Else{\label{css:ob:else}
				{\bf let $p_j \in R$ be such that $x_j \neq \bot
                                  \land {}$}\linebreak 
					\phantom \ \ \ $(\forall p_k \in R.\ x_k \neq \bot \implies v_k \leq v_j)$\;\label{css:ob:else:max}
				\send $\TA(\view, x_j)$ \ToAll\;\label{css:ob:else:send}
			}
			$\assign{\phase}{\proposed}$\;\label{css:ob:phase}
		}

		\smallskip\smallskip

	\SubAlgo{\onreceive $\TA(\view, x)$}{\label{css:ta}
		\precond $\phase \in \{\entered,\proposed\}$\;
		$\assign{\val}{x}$\;
		$\assign{\cview}{\view}$\;
		\send $\TB(\view, x)$ \ToAll\;
		$\assign{\phase}{\accepted}$\;
	}

	\smallskip\smallskip

	\SubAlgo{\mbox{\onreceive $\{\TB(\view, x) \mid p_j\in W\}$
          {\bf from some} $W\in\WW$}}{\label{css:tb}
		$\assign{\val}{x}$\;
		$\assign{\cview}{\view}$\;
		$\assign{\phase}{\decided}$\;
	}

	\smallskip\smallskip
	
	\SubAlgo{{\bf on startup or when the timer $\viewtimer$ expires}}{\label{css:timer}
		$\assign{\view}{\view + 1}$\;\label{css:timer:view}
		$\starttimer(\viewtimer, \view \cdot C)$\;\label{css:timer:timer}
		\send $\OB(\view, \cview, \val)$ \To $\leader(\view)$\;\label{css:timer:send}
		$\assign{\phase}{\entered}$\;\label{css:timer:phase}
	}

	\end{algorithm*}
	\caption{The consensus protocol at a process $p_i$.}
	\Description{The consensus protocol at a process $p_i$.}
	\label{code:css}
\end{figure}

\paragraph{Consensus vs registers under channel failures.}  Interestingly,
solving consensus under process and channel failures is simpler than
implementing registers. The main challenge we had to deal with when implementing
registers was determining whether the information received by a process is up to
date (\S\ref{sec:upper-bound}). This is particularly difficult in the
asynchronous model with unidirectional connectivity, where processes cannot rely
on bidirectional exchanges to confirm the freshness of the information.  In the
partially synchronous model, however, processes can exploit the eventual
timeliness of the network to determine freshness.  Technically, this is done
using a {\em view synchronizer}~\cite{cogsworth,tenderbake}, explained
next. Then the connectivity stipulated by a generalized quorum system is
sufficient to implement $(\FS,\tau)$-wait-free consensus using an algorithm
similar to Paxos~\cite{paxos}.

\paragraph{View synchronization.} The consensus protocol works in a
succession of views, each with a designated leader process
$\leader(v)=p_{((v-1) \!\! \mod n)+1}$. Thus, the role of the leader rotates
round-robin among the processes. The current view is tracked in a variable
$\view$. The protocol synchronizes views among processes via growing
timeouts~\cite{cogsworth,tenderbake}. Namely, each process spends the time
$v \cdot C$ in view $v$, where $C$ is an arbitrary positive constant. To ensure
this, upon entering a view $v$, the process sets a timer $\viewtimer$ for the
duration $v \cdot C$ (line~\ref{css:timer:timer}). When the timer expires, the
process increments its $\view$ (line~\ref{css:timer:view}). Hence, the time
spent by a process in each view grows monotonically as views increase. Even
though processes do not communicate to synchronize their views, this simple
mechanism ensures that all correct processes overlap for an arbitrarily long
time in all but finitely many views.

\begin{proposition}
\label{proposition:sync}
Let $d$ be an arbitrary positive value. There exists a view $\VV$ such that for
every view $v\geq\VV$, all correct processes overlap in $v$ for at least $d$.
\end{proposition}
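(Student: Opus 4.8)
The plan is to exploit the only two facts the partially synchronous model gives us: after $\GST$ clocks no longer drift, and the view durations $v\cdot C$ grow linearly. First I would argue that every correct process advances through infinitely many views. The timer $\viewtimer$ is always set for the finite duration $v\cdot C$ (line~\ref{css:timer:timer}), and once a view is entered after $\GST$ this timer measures real time accurately, so it eventually expires and the view is incremented (line~\ref{css:timer:view}). Consequently, for each correct process $p$ there is a view $V_p$ such that $p$ enters every view $v\geq V_p$ after $\GST$. Since there are finitely many correct processes, I would set $v_0=\max_p V_p$, so that for $v\geq v_0$ all correct processes are running on non-drifting clocks.

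The key step is to show that the offsets between processes' view-entry times are \emph{frozen} past $v_0$. For $v\geq v_0$ let $t_p(v)$ denote the real time at which process $p$ enters view $v$. Because $p$ enters view $v$ after $\GST$, its timer for duration $v\cdot C$ expires exactly $v\cdot C$ real-time units later, so $t_p(v+1)-t_p(v)=v\cdot C$; the identical recurrence holds for every correct process. Telescoping from $v_0$ then yields $t_p(v)-t_q(v)=t_p(v_0)-t_q(v_0)$ for all $v\geq v_0$ and all correct $p,q$. Hence the spread $D:=\max_p t_p(v_0)-\min_p t_p(v_0)$ of entry times into view $v_0$ equals the spread of entry times into any later view.

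Finally I would compute the common overlap directly. In view $v$ a correct process $p$ is active during $[t_p(v),\,t_p(v)+v\cdot C)$, so the set of real times at which \emph{all} correct processes are simultaneously in view $v$ is the intersection $[\max_p t_p(v),\ v\cdot C+\min_p t_p(v))$, whose length is $v\cdot C-D$. Taking $\VV=\max\{v_0,\ \lceil (d+D)/C\rceil\}$ makes this length at least $d$ for every $v\geq\VV$, which is the claim. The main obstacle is the pre-$\GST$ regime: before $\GST$ clocks may drift unboundedly, so the entry-time offsets can be arbitrary and need not stabilize, and one must be careful that each correct process indeed keeps advancing until it is past $\GST$. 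The crux is therefore isolating the view $v_0$ beyond which all correct processes run on accurate clocks and showing that whatever finite offset $D$ has accumulated by then is fixed forever after; the linearly growing view duration $v\cdot C$ then dominates this constant offset, forcing the overlap to grow without bound.
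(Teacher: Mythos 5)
Your proof is correct, and it is essentially the argument the paper intends: the paper states Proposition~\ref{proposition:sync} without an explicit proof, justifying it only by the remark that the time $v\cdot C$ spent in each view grows monotonically, which is exactly the mechanism you formalize. Your telescoping of the post-$\GST$ entry times $t_p(v+1)=t_p(v)+v\cdot C$ to show the skew $D$ is frozen, and the choice $\VV=\max\{v_0,\lceil(d+D)/C\rceil\}$ so that the overlap $v\cdot C-D\geq d$, correctly fills in the details the paper leaves implicit (including the pre-$\GST$ concern that timers still expire so every correct process advances past $\GST$).
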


\paragraph{Protocol operation.} A process stores its initial proposal in
$\pval$ (line~\ref{css:propose:myval}), the last proposal it accepted in $\val$
and the view in which this 
happened in $\cview$. A variable $\phase$ tracks the progress of the process
through the different phases of the protocol. When a process enters a view $v$,
it sends the information about its last accepted value to $\leader(v)$ in a
$\OB$ message (line~\ref{css:timer:send}). This message is analogous to the
$\OB$ message of Paxos; there is no analog of a $\OA$ message,
because leader election is controlled by the synchronizer.

A leader waits until it receives $\OB$ messages from every member of some read
quorum corresponding to its view (line~\ref{css:ob}); 
messages from lower views (considered out of date) are ignored.
Based on these messages, the leader
computes its proposal similarly to Paxos. If some process has previously
accepted a value, the leader picks the one accepted in the maximal view. If
there is no such value and $\opropose()$ has already been invoked at the leader,
the leader picks its own value. Otherwise, it skips its turn.

A process waits until it receives a $\TA$ message from the leader of its view
(line~\ref{css:ta}) and accepts the proposal by updating its $\val$ and
$\cview$. It then notifies every process about this through a $\TB$
message. Finally, when a process receives matching $\TB$ messages from every
member of some write quorum (line~\ref{css:tb}), it knows that a decision has
been reached, and it sets $\phase=\decided$. If there is an ongoing $\opropose()$
invocation, this validates the condition at line~\ref{css:propose:wait}, and the
process returns the decision to the caller.

\begin{proof}[Proof of Theorem~\ref{thm:consensus-upper}.]
  It is easy to see that the protocol satisfies {\sf Validity}. The proof of
  {\sf Agreement} is virtually identical to that of Paxos, relying on the {\sf
    Consistency} property of the generalized quorum system. We now prove that
  the protocol is $(\FS,\tau)$-wait-free.

	Fix a failure pattern $f\in\FS$ and a process $p\in\tau(f)$ that invokes 
	$\opropose(x)$ for some $x$ at a time $t'$.
	By {\sf Availability}, there exist $W\in\WW$ and $R\in\RR$ such that $W$ is 
	$f$-available, and $W$ is $f$-reachable from $R$. By
	Proposition~\ref{proposition:f-avail-scc},
	$W\subseteq U_f$. Then since $p\in\tau(f)=U_f$, $p$ is strongly 
	connected to all processes in $W$ via channels correct under $f$. Thus,
        the following hold after $\GST$: 
	{\em (i)} $R$ can reach $p$ through timely channels; and {\em (ii)}
	$p$ can exchange messages with every member of $W$ via timely channels.

	By Proposition~\ref{proposition:sync} and since leaders rotate round-robin,
	there exists a view $v$ led by $p$ such that all correct processes
	enter $v$ after $\max(\GST,t')$ and overlap in this view for more than $3\delta$.
	We now show that this overlap is sufficient for $p$ to reach a decision.
	Let $t$ be the earliest time by which every correct process has
        entered $v$. Then no correct process leaves $v$ until after $t+3\delta$.
	When a process enters $v$, it sends a $\OB$ message to $p$
	(line~\ref{css:timer:send}). By {\em (i)}, $p$ is guaranteed to receive $\OB$
	messages for view $v$ from every member of $R$ by the time $t+\delta$, 
	thereby validating the guard at line~\ref{css:ob}.
	As a result, by this time $p$ will send its proposal in a $\TA$ message while still in $v$.
	By {\em (ii)}, each process in $W$ will receive this message no later than $t+2\delta$, 
	and respond with a $\TB$ message that will reach $p$ (line~\ref{css:ta}).  
	Thus, by the time $t+3\delta$, $p$ is guaranteed to collect $\TB$ messages for view $v$ 
	from every member of $W$ (line~\ref{css:tb}). After this $p$ sets $\phase=\decided$, thus
	satisfying the guard at line~\ref{css:propose:wait} and deciding.
\end{proof}

\paragraph{Lower bound.}  We now argue that Theorem~\ref{thm:consensus-lower}
holds. First, note that the proof of the lower bound for lattice agreement
(\S\ref{sec:lower-bound}) remains valid in the partially synchronous model:
since the execution $\sigma$ constructed in the proof is finite, it is also
valid under partial synchrony where all its actions occur before $\GST$. Just like
under asynchrony, the lower bound for lattice agreement directly implies the one
for registers under partial
synchrony~\cite{hagit-snapshots,hagit-lattice-agreement}. Finally, since
consensus can be used to implement a MWMR atomic register, the lower bound for
consensus follows from the bound for registers.

%%% Local Variables:
%%% mode: latex
%%% TeX-master: "main.tex"
%%% End:

\section{Related Work}

Research on tolerating channel failures has primarily focused on consensus
solvability in synchronous systems, where an adversary can disconnect channels
in every round but cannot crash processes
~\cite{time-not-healer,santoro-widmayer2,COULOUMA201580,message-adv,CG13,SWK09,NSW19}.
The seminal paper by Dolev~\cite{dolev-strikes-again} and subsequent work
(see~\cite{topology-consensus} for survey) explored this problem in general
networks as a function of the network topology, process failure models, and
synchrony constraints.  However, this work considers only consensus formulations
requiring termination at all correct processes, which in its turn requires them
to be reliably connected.  In contrast, we consider more general liveness
conditions where termination is only required at specific subsets of correct
processes. Our results demonstrate that this relaxation leads to a much richer
characterization of connectivity, which notably does not require all correct
processes to be bidirectionally connected.

Early models, such as \emph{send omission}~\cite{hadzilacos-send-omit} and
\emph{generalized omission}~\cite{perry-toueg-omit}, extended crash failures by allowing 
processes to fail to send or receive messages. These models were shown to be
computationally equivalent to crash failures in both synchronous~\cite{neiger-toueg} and
asynchronous~\cite{coan} systems. However, they are overly restrictive as they classify 
any non-crashed process with unreliable connectivity as faulty. In contrast, our approach 
allows correct processes to have unreliable connectivity.
Santoro and Widmayer introduced the \emph{mobile omission} failure model~\cite{time-not-healer},
which decouples message loss from process failures. They demonstrated that solving consensus
in a synchronous round-based system without process failures requires 
the communication graph to contain a strongly connected component in every 
round~\cite{time-not-healer,santoro-widmayer2}. In contrast, our lower bounds show that 
implementing consensus -- or even a register -- in a partially synchronous system necessitates 
connectivity constraints that hold throughout the entire execution.

Failure detectors and consensus have been shown to be implementable 
in the presence of network partitions~\cite{friedman1,friedman2,friemdan-podc-ba,aguilera-heartbeat}
provided a majority of correct processes are strongly connected and can eventually 
communicate in a timely fashion. In contrast, we show that much weaker connectivity
constraints, captured via generalized quorum systems, are necessary and
sufficient for implementing both register and consensus.
Aguilera et al.~\cite{aguilera-podc03-journal} and subsequent 
work~\cite{aguilera-podc04,dahlia-t-accessible,dahila-t-moving-source,antonio-intermittent-star}
studied the implementation of 
$\Omega$ -- the weakest for consensus~\cite{CT96-weakest} --
under various weak models of synchrony, link reliability, and connectivity.
This work, however, mainly focused on identifying minimal timeliness requirements
sufficient for implementing $\Omega$ while assuming at least fair-lossy 
connectivity between each pair of processes. Given that reliable channels can be 
implemented on top of fair-lossy ones, our results imply that these
connectivity conditions are too strong. 
Furthermore, while the weak connectivity conditions of system $S$ introduced
in~\cite{aguilera-podc03-journal} were shown to be sufficient for
implementing $\Omega$, they were later proven insufficient for consensus~\cite{opodis}.

In our previous work~\cite{opodis}, we considered systems with process crashes and
channel disconnections for $n=2k+1$, where any $k$ processes can fail. We
proved that a majority of reliably connected correct processes is necessary for
implementing registers or consensus. This work, however, does not address
solvability under arbitrary fail-prone systems, such as those with
$k < \lfloor\frac{n-1}{2}\rfloor$ or those not based on failure
thresholds~\cite{quorum-systems-naor,bqs}.

Alquraan et al.~\cite{osdi-partitions} presented a study of system failures due to
faulty channels, which we already mentioned in \S\ref{sec:intro}. This work
highlights the practical importance of designing provably correct systems
that explicitly account for channel failures. Follow-up
work~\cite{osdi-partitions2,omnipaxos} proposed practical systems for tolerating
channel failures, but did not investigate optimal fault-tolerance assumptions.

%%% Local Variables:
%%% mode: latex
%%% TeX-master: "main.tex"
%%% End:

\begin{acks}
This work was partially supported by the projects BYZANTIUM, DECO and PRODIGY
funded by MCIN/AEI, and REDONDA funded by the CHIST-ERA network.
We thank Petr Kuznetsov for helpful comments and discussions.
\end{acks}

\pagebreak

\clearpage

\balance
\bibliographystyle{ACM-Reference-Format}
\bibliography{references}

\iflong
  \clearpage
  \appendix  
  \section{Safety Specifications}
\label{sec:app-safety-defs}

\paragraph{Auxiliary definitions.} 
An operation $\mathit{op'}$ {\em follows}\ an operation $\mathit{op}$, denoted
$\mathit{op} \rightarrow \mathit{op'}$, if $\mathit{op'}$ is invoked after
$\mathit{op}$ returns; $\mathit{op'}$ is {\em concurrent}\/ with $\mathit{op}$
if neither $\mathit{op} \rightarrow \mathit{op'}$ nor
$\mathit{op'} \rightarrow \mathit{op}$. An execution is {\em sequential} if no
operations in it are concurrent with each other.  An execution $\sigma$ of an
object (e.g., register or snapshot) is
\emph{linearizable}~\cite{linearizability} if there exists a set of responses
$R$ and a sequence $\pi = \mathit{op}_1,\mathit{op}_2,\dots$ of all complete
operations in $\sigma$ and some subset of incomplete operations paired with
responses in $R$, such that $\pi$ is a correct sequential execution and
satisfies $\mathit{op}_i \rightarrow \mathit{op}_j \implies i < j$.  The
implementation of an object is \emph{atomic}\/ if all its executions are
linearizable.

% An execution $\sigma$ of an object satisfies \emph{safeness} if the subsequence of $\sigma$
% consisting of all write operations and all read operations not concurrent with
% any writes is linearizable.
% An implementation of an object is \emph{safe}~\cite{lamport1985on} if all its 
% executions satisfy safeness;

\paragraph{MWMR Atomic Registers.}
Let $\Val$ be an arbitrary domain of values. A \emph{multi-writer multi-reader
  atomic register} supports two operations: $\owrite(x)$ stores a value
$x\in \Val$ and returns $\ack$; and $\oread$ retrieves the current value from the
register and returns it. A sequential execution of a register is \emph{correct}
if every $\oread$ operation returns the value written by the most recent
preceding $\owrite$ operation.

\paragraph{SWMR Atomic Snapshots.}
A \emph{single-writer multi-reader atomic snapshot object} consists of
\emph{segments}, where each segment holds a value in $\Val$. In the
single-writer case, each process is assigned a unique segment, which only that
process writes to. All processes can read from all segments. The interface
supports two operations: $\owrite(x)$ allows a process
to store the value $x\in\Val$ in its segment and returns $\ack$; and $\oread$
retrieves the values of all segments as a \emph{vector} of elements in $\Val$.
A sequential execution of a snapshot object is \emph{correct} if every $\oread$
operation returns a vector that reflects the values written by the most recent
preceding $\owrite$ operations on each segment.

%%% Local Variables:
%%% mode: latex
%%% TeX-master: "main.tex"
%%% End:

  \section{Proof of Linearizability for the Protocol in Figure~\ref{code:reg}}
\label{sec:reg-safety}

We now show that the protocol in 
Figure~\ref{code:reg} is linearizable, thereby
completing the proof of Theorem~\ref{thm:reg-upper}.
For simplicity, we only consider executions of the algorithm
where all operations complete.
To each execution of the algorithm, we associate:

\begin{itemize}
    \item a set $V(\sigma)$ consisting of the operations in $\sigma$,
        i.e., $\oread$s and $\owrite$s; and
    \item a relation $\rt(\sigma)$, defined as follows:
        for all $o_1,o_2\in\sigma$, $(o_1,o_2)\in\rt(\sigma)$
        if and only if $o_1$ completes before $o_2$ is invoked.
\end{itemize}

We denote the $\oread$ operations in $\sigma$ by $R(\sigma)$ and 
the $\owrite$ operations in $\sigma$ by $W(\sigma)$. A 
\emph{dependency graph} of $\sigma$ is a tuple
$G=(V(\sigma),\rt(\sigma),\WR,\ww,\rw)$, where the relations
$\WR,\ww,\rw\subseteq V(\sigma)\times V(\sigma)$ are such that:

\begin{enumerate}
    \item 
        \emph{(i)} if $(o_1,o_2)\in\WR$, then $o_1\in W(\sigma)$ 
            and $o_2\in R(\sigma)$;\\
        \emph{(ii)} for all $w_1,w_2,r$ $\in \sigma$ such that
            $(w_1,r)\in\WR$ and $(w_2,r)\in\WR$, we have $w_1=w_2$;\\
        \emph{(iii)} for all $(w,r)\in\WR$ we have $\val(w)=\val(r)$; and\\
        \emph{(iv)} if there is no $w\in W(\sigma)$ such that $(w,r)\in\WR$,
            then $r$ returns $0$;
    \item $\ww$ is a total order over $W(\sigma)$; and
    \item $\rw = \{(r,w) \mid \exists w'.\ (w',r)\in\WR \wedge (w',w)\in\ww\}\ \cup$\\
    \phantom\ \ \ \ \ \ \ \ \ \,$\{(r,w) \mid r\in R(\sigma) \wedge w\in W(\sigma) \wedge \neg\exists w'.\ (w',r)\in\WR\}$.
\end{enumerate}

To prove linearizability we rely on the following theorem\footnote{Atul
  Adya. 1999. Weak Consistency: A Generalized Theory and Optimistic
  Implementations for Distributed Transactions. Ph.D thesis, MIT, Technical
  Report MIT/LCS/TR-786.
% \cite{adya99-weak-consis}.
}

\begin{theorem}
    \label{thm:linearizability}
    An execution $\sigma$ is linearizable if and only if there exists
    $\WR$, $\ww$ and $\rw$ such that $G=(V(\sigma),\rt(\sigma),\WR,\ww,\rw)$ 
    is an acyclic dependency graph.
\end{theorem}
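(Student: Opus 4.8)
The plan is to prove both implications by exploiting the correspondence between a linearization of $\sigma$ (a correct sequential execution respecting real-time order) and a linear extension of the directed graph $G$ whose edge set is $\rt(\sigma)\cup\WR\cup\ww\cup\rw$. The guiding idea, standard for serialization/dependency graphs, is that $G$ is acyclic exactly when all four relations can be embedded simultaneously into a single total order witnessing linearizability. Both directions hinge on the ``latest-write-wins'' semantics of a correct sequential register execution.

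For the ($\Leftarrow$) direction, suppose $G$ is acyclic. Since $V(\sigma)$ is finite, I would take a topological sort $\pi$ extending every edge of $G$. Because $\rt(\sigma)$ is among the edges, $\pi$ respects real-time order. It remains to show $\pi$ is a \emph{correct} sequential execution, i.e., every $\oread$ returns the value of the most recent preceding $\owrite$ (or $0$ if none). Fix a read $r$. If some $w$ has $(w,r)\in\WR$, condition (iii) gives $\val(r)=\val(w)$, and the $\WR$ edge places $w$ before $r$ in $\pi$; I must rule out any intervening write. Suppose a write $w''$ satisfies $w<_\pi w''<_\pi r$. By the $\ww$ total order (condition 2), either $(w,w'')\in\ww$ or $(w'',w)\in\ww$. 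In the first case the definition of $\rw$ (condition 3) yields $(r,w'')\in\rw$, forcing $r<_\pi w''$ and contradicting $w''<_\pi r$; in the second case the $\ww$ edge forces $w''<_\pi w$, contradicting $w<_\pi w''$. Hence $w$ is the most recent preceding write and $r$ is correct. If instead no write reads-from $r$, condition (iv) gives that $r$ returns $0$, while the second clause of $\rw$ places $r$ before every write in $\pi$, so no write precedes $r$, matching the initial value.

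For the ($\Rightarrow$) direction, suppose $\sigma$ is linearizable and fix a witnessing total order $\pi$. I would \emph{read off} the three relations from $\pi$: set $(w,r)\in\WR$ iff $w$ is the most recent $\owrite$ preceding $r$ in $\pi$ (this makes conditions (i)--(iv) hold, since a correct sequential execution returns the latest preceding write or $0$); let $\ww$ be the restriction of $\pi$ to $W(\sigma)$ (a total order, condition 2); and define $\rw$ exactly as forced by condition 3. The remaining task is to verify that every edge of $G$ is oriented forward in $\pi$, which immediately yields acyclicity. The edges from $\rt(\sigma)$, $\WR$, and $\ww$ are forward by construction. For an edge $(r,w)\in\rw$ arising from $(w',r)\in\WR$ and $(w',w)\in\ww$, note that $w'$ is the last write before $r$ in $\pi$ and $w$ lies strictly after $w'$ in $\ww$, hence after $w'$ in $\pi$; since no write lies strictly between $w'$ and $r$, the write $w$ must fall after $r$, giving $r<_\pi w$. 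The second clause of $\rw$ is handled identically, using that such an $r$ has no preceding write.

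I expect the main obstacle to be the $\rw$ (anti-dependency) reasoning in both directions: $\rw$ is the only relation not supplied directly by the object semantics but derived from $\WR$ and $\ww$, and correctness hinges on the ``no intervening write'' property of latest-write-wins semantics. Showing that this property is both \emph{forced} by acyclicity in ($\Leftarrow$) and \emph{guaranteed} by any linearization in ($\Rightarrow$) is the crux tying the structural relations to the total-order semantics; the real-time and value-matching obligations are comparatively routine once this is in place.
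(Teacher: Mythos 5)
Your proof is correct, but note that the paper itself gives no proof of this statement: Theorem~\ref{thm:linearizability} is imported verbatim from Adya's PhD thesis (cited in the footnote) and used as a black box to establish linearizability of the register protocol, so there is no in-paper argument to compare against. Your argument is the standard serialization-graph characterization and fills this gap soundly. In the ($\Leftarrow$) direction, the topological sort of $\rt(\sigma)\cup\WR\cup\ww\cup\rw$ respects real time via the $\rt$ edges, and your case analysis correctly rules out an intervening write $w''$ between a read's $\WR$-predecessor $w$ and the read: totality of $\ww$ forces either $(w,w'')\in\ww$, whence the first clause of $\rw$ yields $(r,w'')\in\rw$ and a contradiction with the sort, or $(w'',w)\in\ww$, contradicting $w<_\pi w''$; the no-predecessor case is likewise handled correctly, since condition (iv) pins the return value to $0$ while the second clause of $\rw$ places $r$ before every write. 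In the ($\Rightarrow$) direction, reading $\WR$ off the linearization as ``most recent preceding write'' makes conditions (i)--(iv) hold, and your verification that every $\rw$ edge is forward in $\pi$ (because the $\WR$-predecessor $w'$ is the \emph{last} write before $r$, any $\ww$-successor of $w'$ must land after $r$) is exactly the crux and is argued correctly. One caveat worth making explicit: as stated, the equivalence holds for executions in which all operations complete, which the paper assumes at the start of its linearizability appendix; for general $\sigma$ the definition of linearizability permits discarding incomplete operations or completing them with fabricated responses, and your construction would need the corresponding preprocessing step before taking $V(\sigma)$ as the vertex set.
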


We now prove that every execution $\sigma$ of the protocol is linearizable.
Fix one such execution $\sigma$. Our strategy is to find witnesses for 
$\WR$, $\ww$ and $\rw$ that validate the conditions of Theorem~\ref{thm:linearizability}.
To this end, consider the function $\tau:\sigma\rightarrow\mathbb{N}\times\mathbb{N}$ that
maps each operation in $\sigma$ to a version as follows:

\begin{itemize}
    \item for a $\oread$ $r$, $\tau(r)$ is the version of $s'$ at
      line~\ref{code:reg:read:t} in Figure~\ref{code:reg}; and
    \item for a $\owrite$ $w$, $\tau(w)$ is the version $t$ at
      line~\ref{code:reg:t} in Figure~\ref{code:reg}.
\end{itemize}

We then define the required witnesses as follows:

\begin{itemize}
    \item $(w,r)\in\WR$ if and only if $w\in W(\sigma)$, $r\in R(\sigma)$ and $\tau(w)=\tau(r)$;
    \item $(w,w')\in\ww$ if and only if $w,w'\in W(\sigma)$ and $\tau(w)<\tau(w')$; and
    \item $\rw$ is derived from $\WR$ and $\ww$ as per the dependency graph definition.
\end{itemize}

Our proof relies on the next proposition.
We omit its easy proof which follows from the {\sf Validity} property
of the quorum access functions:

\begin{proposition}
    \label{prop:lin-aux}
    The following hold:
    \begin{enumerate}
        \item For every $w_1,w_2\in W(\sigma)$, $\tau(w_1)=\tau(w_2)$ implies $w_1=w_2$.
        \item For every $w\in W(\sigma)$, $\tau(w)>(0,0)$.
        \item For every $r\in R(\sigma)$, either $\tau(r)=(0,0)$ or there exists
            $w\in W(\sigma)$ such that $\tau(r)=\tau(w)$.
        \item For every $r\in R(\sigma)$ and $w\in W(\sigma)$, $\tau(r)=\tau(w)$ 
            implies $\val(r)=\val(w)$.
    \end{enumerate}
\end{proposition}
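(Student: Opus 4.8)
The plan is to prove each of the four items by reasoning about how versions are minted and propagated through the register protocol. Recall that a $\owrite$ operation mints a fresh version $t=(k+1,i)$ at line~\ref{code:reg:t}, tagging it with its own process identifier $i$, whereas both $\oread$ and $\owrite$ write a (value, version) pair back through $\writequorum$ using the update function at line~\ref{code:reg:write:upd} (resp.\ line~\ref{code:reg:read:upd}), which installs the incoming version only when it strictly exceeds the local one. Two facts will be used throughout: versions stored at any process are monotonically non-decreasing (each update either keeps the state or raises its version), so every version ever stored is $\ge(0,0)$; and, by \textsf{Validity}, any state returned by $\readquorum()$ is obtained by applying a subset of previously invoked update functions to the initial state $(0,0)$.

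Item~2 is immediate: since $k$ is the first component of a version that is $\ge(0,0)$, we have $k\ge 0$, hence $\tau(w)=(k+1,i)>(0,0)$. For item~1 I would split on whether $w_1$ and $w_2$ are issued by the same process. If they are issued by distinct processes $p_i\neq p_j$, their versions carry different identifiers in the second coordinate, so $\tau(w_1)\neq\tau(w_2)$. If they are issued by the same process, then since a process runs its operations sequentially we may assume $w_1\rightarrow w_2$; by \textsf{Real-time ordering} (Theorem~\ref{thm:qaf-safety}) the set returned by the $\readquorum()$ of $w_2$ contains a state to which the update of $w_1$ was applied, and because versions never decrease this state has version $\ge\tau(w_1)$. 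Hence the maximal first component $k_2$ computed by $w_2$ satisfies $k_2\ge k_1+1$, giving $\tau(w_2)=(k_2+1,i)>\tau(w_1)$. Note that this step genuinely relies on \textsf{Real-time ordering}, not merely \textsf{Validity}, to guarantee that the later write observes the earlier one's effect.

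For items~3 and~4 I would prove a single invariant by induction over the steps of the execution: every version stored at any process is either $(0,0)$ or equals $\tau(w)$ for some $w\in W(\sigma)$, and whenever a stored version equals $\tau(w)$ its accompanying value equals $\val(w)$. In the inductive step a state changes only when an update is applied, and that update was created either by a $\owrite$ $w'$, in which case it carries exactly the pair $(\val(w'),\tau(w'))$, or by a $\oread$ $r'$, in which case it carries the maximal (value, version) pair that $r'$ read, which by the induction hypothesis already respects the binding. Thus read write-backs only propagate pairs that already satisfy the invariant, while writes introduce their own consistent pair, so the invariant is preserved. Item~3 then follows because $\tau(r)=s'.\ts$ is the version of some stored state, and item~4 follows by combining the value--version binding with the uniqueness from item~1: if $\tau(r)=\tau(w)$, then the state $s'$ returned to $r$ has version $\tau(w)$ and hence value $\val(w)$.

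The main obstacle is making the induction for items~3 and~4 rigorous, because read write-backs recursively refer to values read by earlier operations, so the binding must be threaded through the entire causal history rather than established for a single operation in isolation. Phrasing it as a step-wise invariant over the whole execution is what makes the argument go through cleanly, and it is precisely here that the uniqueness of item~1 is needed, to pin down the unique write responsible for a given version $t$ and thereby rule out two incompatible value assignments to the same version.
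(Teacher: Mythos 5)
Your proof is correct, and it is actually more careful than the paper itself, which omits the proof of Proposition~\ref{prop:lin-aux} entirely with the remark that it ``follows from the \textsf{Validity} property of the quorum access functions.'' Your write-up supplies the details this remark hides, and it correctly identifies a point the paper's one-liner glosses over: item~1 does \emph{not} follow from \textsf{Validity} alone. For two $\owrite$s by the same process $p_i$, \textsf{Validity} permits the later operation's $\readquorum()$ to return states built from a subset of prior updates that excludes the earlier write, in which case both could mint the same version (e.g., both observing maximal version $(0,0)$ and minting $(1,i)$); your appeal to \textsf{Real-time ordering} (Theorem~\ref{thm:qaf-safety}), together with the standard assumption that a process invokes operations sequentially, is genuinely needed there, and the lexicographic calculation ($k_2\geq k_1+1$, hence $\tau(w_2)>\tau(w_1)$) goes through because the observed state has version at least $(k_1+1,i)$ and versions never decrease. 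Your handling of items~3 and~4 via a single step-wise invariant over the whole execution -- every stored version is $(0,0)$ or equals $\tau(w)$ for some write $w$, with the accompanying value bound to $\val(w)$ -- is the right way to break the apparent circularity created by read write-backs, whose update functions (line~\ref{code:reg:read:upd}) carry pairs obtained from earlier reads; combined with the uniqueness from item~1, this pins the value of any state with version $\tau(w)$ to $\val(w)$, exactly as you argue. The monotonicity observation (neither update function at lines~\ref{code:reg:write:upd} and~\ref{code:reg:read:upd} ever decreases $\ts$) and the $k\geq 0$ bound for item~2 are likewise correct, so your argument is complete.
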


Our proof also relies on the following auxiliary lemma:

\begin{lemma}
    \label{lemma:lin-aux}
    The following hold:
    \begin{enumerate}
        \item For all $r,w\in V(\sigma)$, if $(r,w)\in\rw$ then $\tau(r)<\tau(w)$.
        \item For all $o_1,o_2\in V(\sigma)$, if $(o_1,o_2)\in\rt$, then $\tau(o_1)\leq\tau(o_2)$.
            Moreover, if $o_2$ is a $\owrite$, then $\tau(o_1)<\tau(o_2)$.
    \end{enumerate}
\end{lemma}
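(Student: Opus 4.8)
The plan is to handle the two parts separately: Part~1 unfolds the definitions of $\rw$, $\WR$ and $\ww$ together with Proposition~\ref{prop:lin-aux}, whereas Part~2 is where the {\sf Validity} and {\sf Real-time ordering} guarantees of the quorum access functions are actually used.

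For Part~1, I would case-split on the two disjuncts in the definition of $\rw$. If $(r,w)\in\rw$ because some write $w'$ satisfies $(w',r)\in\WR$ and $(w',w)\in\ww$, then by definition $\tau(r)=\tau(w')$ and $\tau(w')<\tau(w)$, so $\tau(r)<\tau(w)$. Otherwise $(r,w)\in\rw$ with no write $\WR$-related to $r$; here I would invoke Proposition~\ref{prop:lin-aux}(3), noting that $\tau(r)=\tau(w'')$ for a write $w''$ would make $(w'',r)\in\WR$ by definition of $\WR$, a contradiction, so $\tau(r)=(0,0)$, and then Proposition~\ref{prop:lin-aux}(2) gives $\tau(w)>(0,0)=\tau(r)$. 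Both branches yield the claim.

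For Part~2, the first step I would isolate is a monotonicity fact about the protocol's update functions: inspecting the $\lambda$-terms at lines~\ref{code:reg:write:upd} and~\ref{code:reg:read:upd}, every operation $o$ produces an update $\upd_o$ with $\upd_o(s).\ts=\max(s.\ts,\tau(o))$ for all states $s$, so $\upd_o(s).\ts\ge s.\ts$ and $\upd_o(s).\ts\ge\tau(o)$. Next, since $(o_1,o_2)\in\rt$, operation $o_1$ completes --- in particular its final $\writequorum(\upd_{o_1})$ terminates --- before $o_2$ is invoked, hence before the $\readquorum()$ opening $o_2$'s get phase (line~\ref{code:reg:write:read-quorum} or~\ref{code:reg:read:read-quorum}). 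By {\sf Real-time ordering} this $\readquorum()$ returns a set $S$ containing a state $s^\ast$ to which $\upd_{o_1}$ has been applied; by {\sf Validity}, $s^\ast$ is the result of applying to the initial state some set of updates including $\upd_{o_1}$ in some order. Since applying $\upd_{o_1}$ raises the version to at least $\tau(o_1)$ and, by the monotonicity above, no later update lowers it, I would conclude $s^\ast.\ts\ge\tau(o_1)$.

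I would then finish by a case split on $o_2$. If $o_2$ is a read, then $\tau(o_2)=\max\{s.\ts\mid s\in S\}\ge s^\ast.\ts\ge\tau(o_1)$, giving the non-strict bound. If $o_2$ is a write, then $\tau(o_2)=(k+1,i)$ (line~\ref{code:reg:t}) where $(k,\_)=\max\{s.\ts\mid s\in S\}$ (line~\ref{code:reg:t'}), and since $s^\ast\in S$ we get $\tau(o_1)\le s^\ast.\ts\le(k,\_)<(k+1,i)=\tau(o_2)$ lexicographically, the required strict inequality. I expect the main obstacle to be precisely the bridging step in the previous paragraph: {\sf Real-time ordering} only promises that \emph{some} returned state has incorporated $\upd_{o_1}$, so one must combine it with {\sf Validity} and the monotonicity of all register updates to guarantee that this state actually carries version $\ge\tau(o_1)$; once that is in hand, the remaining arguments are routine bookkeeping with the lexicographic order on versions.
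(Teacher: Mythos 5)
Your proof is correct and takes essentially the same route as the paper's: the identical case split on the definition of $\rw$ via Proposition~\ref{prop:lin-aux}(2)--(3) for Part~1, and for Part~2 the same combination of {\sf Real-time ordering} (some returned state incorporates $o_1$'s final update) with {\sf Validity} and the monotonicity of the protocol's update functions to get $s^\ast.\ts \ge \tau(o_1)$, followed by the read/write case split on $o_2$. Your explicit identity $\upd_{o}(s).\ts = \max(s.\ts, \tau(o))$ merely sharpens what the paper phrases as the updates never decreasing $\ts$.
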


\begin{proof}
    \begin{enumerate}
        \item Let $r,w\in V(\sigma)$ be such that $(r,w)\in\rw$. There are two cases:
            \begin{itemize}
                \item Suppose that for some $w'$ we have $(w',r)\in\WR$ and
                    $(w',w)\in\ww$. The definition of $\WR$ implies that 
                    $\tau(r)=\tau(w')$, and the definition of $\ww$ implies
                    that $\tau(w')<\tau(w)$. Then $\tau(r)<\tau(w)$.
                \item Suppose now that $\neg \exists w'.\ (w',r)\in\WR$.
                    We show that $\tau(r)=(0,0)$. Indeed, if $\tau(r)\neq(0,0)$,
                    then by Proposition~\ref{prop:lin-aux}(3), there exists 
                    $w\in V(\sigma)$ such that $\tau(r)=\tau(w)$.
                    But then $(w,r)\in\WR$, contradicting the assumption that
                    there is no such write. At the same time,
                    Proposition~\ref{prop:lin-aux}(2) implies that 
                    $\tau(w)>(0,0)$. Then $\tau(r)<\tau(w)$.
            \end{itemize}
        \item Let $o_1,o_2\in V(\sigma)$ be such that $(o_1,o_2)\in\rt$.
            Let $u$ be the update function computed during $o_1$'s invocation at 
            lines~\ref{code:reg:write:upd} (if $o_1$ is a $\owrite$) or
            line~\ref{code:reg:read:upd} (if $o_1$ is a $\oread$).
            The definitions of $u$ imply that right after a process applies
            it to its state it has $\ts\geq \tau(o_1)$.
            
            Suppose now that $o_2$ is invoked at a process $p$.
            Let $S=\{s_i \mid i=1..k\}$ be the states returned by the 
            corresponding $\readquorum()$ invocation
            (lines~\ref{code:reg:write:read-quorum}~and~\ref{code:reg:read:read-quorum}).
            The {\sf Validity} property of the quorum access functions
            ensures that for each $s_i\in S$
            there exists a set of previous invocations
            $\{\writequorum(u^i_j) \mid j=1..k_i\}$ such that $s_i$ is the result of applying
            the update functions in $\{u_j^i \mid j=1..k_i\}$ to the initial state in some
            order. Since the $\readquorum()$ invocation happens after
            $\writequorum(u)$ has completed,
            by the {\sf Real-time ordering} property there is at least
            one state $s_r$ to which $u$ has been applied:
            $u = u_j^r$ for some $j$.
            Because the update functions passed by the protocol to
            $\writequorum()$ never decrease $\ts$, we then have $s_r.\ts\geq\tau(o_1)$.
            There are two cases:
            \begin{itemize}
                \item Suppose $o_2$ is a $\owrite$. 
                    Because $\tau(o_2)$ is greater than the maximum $\ts$ value among 
                    the states in $S$ (lines~\ref{code:reg:t'}-\ref{code:reg:t}),
                    we have $\tau(o_1)<\tau(o_2)$.
                \item Suppose $o_2$ is a $\oread$. 
                    Because $\tau(o_2)$ is the maximum $\ts$ value among 
                    the states in $S$ (line~\ref{code:reg:read:t}),
                    we have $\tau(o_1)\leq\tau(o_2)$.
            \end{itemize}
    \end{enumerate}
\end{proof}

\begin{theorem}
    $G=(V(\sigma),\rt(\sigma),\WR,\ww,\rw)$ is an acyclic dependency graph.
\end{theorem}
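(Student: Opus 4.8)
The plan is to establish two things about $G$: that it satisfies the three structural conditions in the definition of a dependency graph, and that it is acyclic. By Theorem~\ref{thm:linearizability}, proving both yields the linearizability of $\sigma$, completing the proof of Theorem~\ref{thm:reg-upper}.

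First I would dispatch the structural conditions, all of which follow directly from the chosen witnesses together with Proposition~\ref{prop:lin-aux}. Condition 1(i) is immediate from the definition of $\WR$. For 1(ii), if $(w_1,r),(w_2,r)\in\WR$ then $\tau(w_1)=\tau(r)=\tau(w_2)$, so Proposition~\ref{prop:lin-aux}(1) gives $w_1=w_2$. Condition 1(iii) follows from Proposition~\ref{prop:lin-aux}(4), and 1(iv) from Proposition~\ref{prop:lin-aux}(3): if no write satisfies $(w,r)\in\WR$ then no write shares $r$'s version, forcing $\tau(r)=(0,0)$, so $r$ returns the initial value $0$. Condition 2 holds because $\tau$ is injective on $W(\sigma)$ by Proposition~\ref{prop:lin-aux}(1) and versions are lexicographically totally ordered, making $\ww$ a strict total order. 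Condition 3 holds by construction, since $\rw$ was defined as the relation derived from $\WR$ and $\ww$.

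The main work is acyclicity, and the key idea is to treat $\tau$ as a monotone ranking. I would first record how $\tau$ changes across each edge type: across a $\WR$ edge $(w,r)$ we have $\tau(w)=\tau(r)$; across a $\ww$ edge $\tau$ strictly increases by definition; across an $\rw$ edge $\tau$ strictly increases by Lemma~\ref{lemma:lin-aux}(1); and across an $\rt$ edge $\tau$ is non-decreasing, strictly so when the target is a $\owrite$, by Lemma~\ref{lemma:lin-aux}(2). Hence $\tau$ never decreases along any edge of $G$.

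The crux is a case analysis on a hypothetical cycle. Since $\tau$ is non-decreasing along every edge, any cycle must have constant $\tau$ around it, so every edge in the cycle must be one of the non-strict kinds: either a $\WR$ edge or an $\rt$ edge whose target is a $\oread$. Every such edge points into a $\oread$, so every vertex of the cycle is a $\oread$; in particular the cycle contains no $\owrite$, which rules out $\WR$ edges, whose source is a $\owrite$. The cycle therefore consists solely of $\rt$ edges among reads. But $\rt(\sigma)$ is a real-time precedence order and is thus irreflexive and acyclic, so no such cycle can exist. This contradiction establishes acyclicity. I expect this final step---pinning down that a $\tau$-constant cycle collapses to an $\rt$-cycle among reads---to be the one subtle point, since it requires tracking edge types and vertex types simultaneously, whereas the monotonicity bookkeeping and the structural conditions are routine.
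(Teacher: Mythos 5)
Your proposal is correct and follows essentially the same route as the paper's proof: the dependency-graph conditions are discharged via Proposition~\ref{prop:lin-aux}, and acyclicity is obtained by using $\tau$ as a rank that is non-decreasing along every edge and strictly increasing into writes, so a cycle has constant $\tau$, must consist only of reads linked by $\rt$ edges, and then contradicts the irreflexivity/transitivity of real-time precedence. Your write-up merely spells out the structural conditions and the edge-type bookkeeping that the paper compresses into ``it easily follows.''
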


\begin{proof}
    From Proposition~\ref{prop:lin-aux} and the definitions of $\WR$, $\ww$ and $\rw$ 
    it easily follows that $G$ is a dependency graph. 
    We now show that $G$ is acyclic. By contradiction, assume that the graph contains a 
    cycle $o_1, \dots, o_n = o_1$. Then $n > 1$. 
    By Lemma~\ref{lemma:lin-aux} and the definitions of $\tau$ and $\ww$, we must have 
    $\tau(o_1) \leq \dots \leq \tau(o_n) = \tau(o_1)$, so that $\tau(o_1) = \dots = \tau(o_n)$. 
    Furthermore, if $(o,o')$ is an edge of $G$ and $o'$ is a write, then $\tau(o) < \tau(o')$. 
    Hence, all the operations in the cycle must be reads, and thus, all the edges in the 
    cycle come from $\rt$. Then there exist reads $r_1$, $r_2$ in the cycle such that 
    $r_1$ completes before $r_2$ is invoked and $r_2$ completes before $r_1$ is invoked, 
    which is a contradiction.
\end{proof}

%%% Local Variables:
%%% mode: latex
%%% TeX-master: "main.tex"
%%% End:

  \section{Proof of Lemma~\ref{lemma:lattice-agreement-lower-1}}
\label{sec:app-lattice-lower}

\begin{figure}[t]
    \centering
    \includegraphics[width=0.5\linewidth,trim=550 275 550 150,clip]{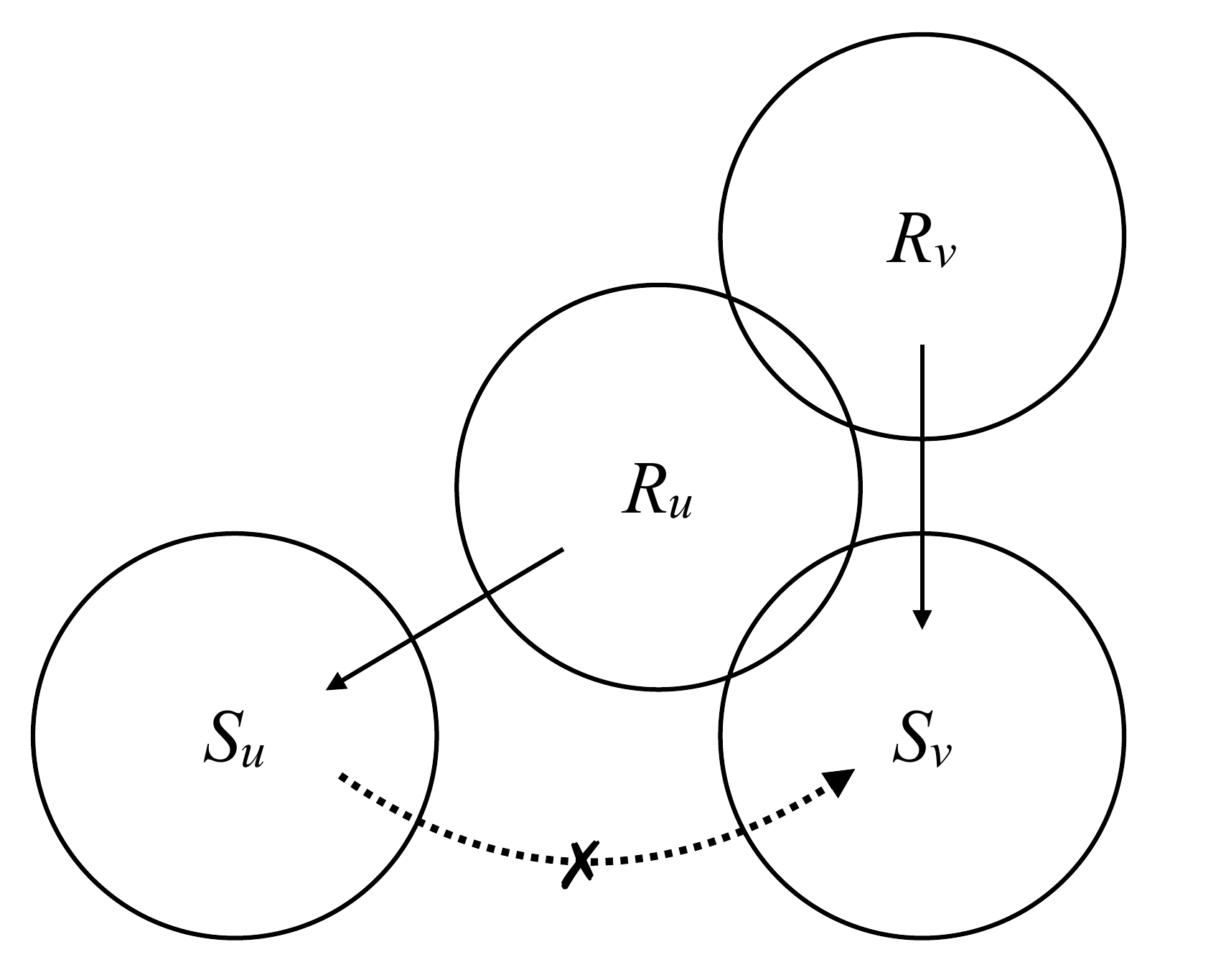}
    \caption{Illustration of the sets $R_k$ and $S_k$ for $k\in\{u,v\}$.}
    \Description{Illustration of the sets $R_k$ and $S_k$ for $k\in\{u,v\}$.}
    \label{fig:lattice-agreement-lower-2}
\end{figure}

Assume by contradiction that $\ALG$ is an $(f,T)$-obstruction-free implementation
of lattice agreement, but $T$ is not strongly connected in $\RG{\GG}{f}$.
Thus, there exist $u, v \in T$ such that there is no path from $u$ to $v$ in $\RG{\GG}{f}$,
or from $v$ to $u$. Without loss of generality, assume the former.
Let $S_u$ be the strongly connected component (SCC) of $\RG{\GG}{f}$ containing $u$,
and $S_v$ be the SCC containing $v$. By assumption, $S_u \cap S_v = \emptyset$.
Let
$R_u$ be the set of processes outside $S_u$ that can reach $u$ in $\RG{\GG}{f}$, and
$R_v$ be the set of processes outside $S_v$ that can reach $v$ in $\RG{\GG}{f}$.
Refer to Figure~\ref{fig:lattice-agreement-lower-2} for a visual depiction.

\setcounter{myclaim}{0}
\begin{myclaim}
\label{claim:lattice_lower_1}
% \noindent
For any $k \in \{u,v\}$, $R_k \cup S_k$ is unreachable from
$\PP \setminus (R_k \cup S_k)$ in $\RG{\GG}{f}$.
\end{myclaim}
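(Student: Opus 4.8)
The plan is to fix $k \in \{u,v\}$ and first establish a clean characterization of the set $R_k \cup S_k$: it is precisely the set of all processes that can reach $k$ via a directed path in $\RG{\GG}{f}$. Indeed, since $S_k$ is the strongly connected component containing $k$, every process in $S_k$ can reach $k$; and $R_k$ consists, by definition, of the processes outside $S_k$ that can reach $k$. Conversely, any process that can reach $k$ lies either in $k$'s SCC (hence in $S_k$) or outside it (hence in $R_k$). Thus $R_k \cup S_k = \{p \in \PP \mid p \text{ reaches } k \text{ in } \RG{\GG}{f}\}$, and the defining clause ``outside $S_k$'' guarantees $R_k \cap S_k = \emptyset$, so this union captures all and only the ancestors of $k$.

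With this characterization in hand, I would prove the claim by contradiction using transitivity of reachability. Suppose some process $p \in \PP \setminus (R_k \cup S_k)$ can reach a process $q \in R_k \cup S_k$ in $\RG{\GG}{f}$. By the characterization, $q$ can reach $k$; composing the directed path from $p$ to $q$ with the one from $q$ to $k$, we conclude that $p$ can also reach $k$. But then $p \in R_k \cup S_k$ by the characterization, contradicting $p \notin R_k \cup S_k$. Hence no such $p$ exists, i.e., $R_k \cup S_k$ is unreachable from $\PP \setminus (R_k \cup S_k)$, as required.

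There is essentially no hard step here: the statement is an elementary closure property of the set of ancestors of a fixed vertex in a directed graph, and the only fact it relies on is that reachability in $\RG{\GG}{f}$ is transitive, which the paper already adopts as a standing assumption. The only mild care needed is confirming the characterization above — in particular that $R_k$ and $S_k$ are disjoint and that $S_k$-membership already implies reachability of $k$ — so that $R_k \cup S_k$ genuinely equals the ancestor set and the transitivity argument closes without gaps.
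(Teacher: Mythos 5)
Your proof is correct and matches the paper's intent: the paper states this claim without any proof, treating it as immediate, and your characterization of $R_k \cup S_k$ as exactly the set of processes that can reach $k$ in $\RG{\GG}{f}$, followed by concatenation of directed paths, is precisely the elementary closure argument being left implicit. One small correction: transitivity of reachability holds automatically in any directed graph (concatenating a path from $p$ to $q$ with one from $q$ to $k$ requires no hypothesis), so you should not appeal to the paper's ``standing assumption'' that the connectivity relation of $\GG\setminus f$ is transitive --- that without-loss-of-generality assumption is introduced only for the upper-bound constructions (\S\ref{sec:upper-bound} and \S\ref{sec:consensus}), where it is justified by message forwarding, and it would be circular to assume it in a lower-bound argument, but your path-composition step already stands on its own without it.
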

\begin{myclaim}
\label{claim:lattice_lower_2}
For any $k \in \{u,v\}$, $R_k$ is unreachable from $S_k$ in $\RG{\GG}{f}$.
\end{myclaim}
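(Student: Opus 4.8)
The plan is to prove Claim~\ref{claim:lattice_lower_2} by contradiction, using only elementary properties of strongly connected components in the residual graph $\RG{\GG}{f}$. I will lean on two facts fixed by the setup: $S_k$ is the SCC containing $k$, and $R_k$ consists \emph{by definition} of processes that lie outside $S_k$ yet can reach $k$ in $\RG{\GG}{f}$. The single structural property I will exploit is that any two vertices of a common SCC are mutually reachable; in particular $k$ can reach every member of $S_k$.

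First I would assume, toward a contradiction, that $R_k$ is reachable from $S_k$: there exist $x\in S_k$ and $y\in R_k$ with a directed path from $x$ to $y$ in $\RG{\GG}{f}$. Since $x$ and $k$ both belong to $S_k$, the process $k$ can reach $x$; prepending a $k\to x$ path to the assumed $x\to y$ path shows that $k$ can reach $y$.

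Next I would invoke the defining property of $R_k$: because $y\in R_k$, the process $y$ can reach $k$. Together with the previous step, $y$ and $k$ are mutually reachable, hence lie in the same SCC, namely $S_k$; so $y\in S_k$. But $R_k$ contains only processes outside $S_k$, forcing $y\notin S_k$ — a contradiction. This proves the claim, and the argument is identical for $k=u$ and $k=v$.

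I do not expect a genuine obstacle: the whole proof is a one-line reachability composition, and the contradiction is immediate from the fact that $R_k$ and $S_k$ are disjoint by construction. The only point requiring care is to use the SCC in the correct direction — that $k$ reaches $x$, rather than merely $x$ reaches $k$ — so that the path $x\to y$ can be extended backward through $k$. The same care governs Claim~\ref{claim:lattice_lower_1}, whose proof follows the same template: every process in $R_k\cup S_k$ can reach $k$ (those in $S_k$ via the SCC, those in $R_k$ by definition), so any external process reaching into $R_k\cup S_k$ would itself reach $k$ and therefore belong to $R_k$, contradicting its exclusion from $R_k\cup S_k$.
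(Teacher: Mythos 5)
Your proof is correct and is essentially the argument the paper intends: the paper states this claim without proof as an elementary fact about SCCs, and your reachability composition (a path $k \to x \to y$ within $\RG{\GG}{f}$ combined with $y \to k$ from the definition of $R_k$ places $y$ in the SCC $S_k$, contradicting $y \notin S_k$) is precisely the implicit justification. Your parallel sketch for Claim~\ref{claim:lattice_lower_1} is likewise sound, so nothing further is needed.
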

% \textbf{Justification:}
% By definition, $R_k$ consists of nodes outside $S_k$ that can reach $k$.
% Thus, there cannot be a path from $S_k$ (which is strongly connected) back to $R_k$.
\begin{myclaim}
\label{claim:lattice_lower_3}
$S_u \cap (R_v \cup S_v) = \emptyset$.
\end{myclaim}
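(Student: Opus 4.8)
The plan is to decompose the claim into two disjointness statements. Since $R_v \cup S_v$ is a union, it suffices to show separately that $S_u \cap S_v = \emptyset$ and that $S_u \cap R_v = \emptyset$. The first of these is already available: by the setup of the proof, $u$ and $v$ lie in distinct strongly connected components of $\GG\setminus f$ (because there is no path from $u$ to $v$), so $S_u \cap S_v = \emptyset$ is immediate from the definition of SCC. Thus the only real content is the second disjointness, $S_u \cap R_v = \emptyset$.

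I would establish $S_u \cap R_v = \emptyset$ by contradiction. Suppose some process $w \in S_u \cap R_v$. From $w \in S_u$, the defining property of a strongly connected component gives that $u$ and $w$ are mutually reachable in $\GG\setminus f$; in particular there is a directed path from $u$ to $w$. From $w \in R_v$, the definition of $R_v$ (processes that can reach $v$) gives a directed path from $w$ to $v$. Concatenating these two paths yields a directed path from $u$ to $v$ in $\GG\setminus f$.

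This contradicts the standing assumption of the proof, namely that there is no path from $u$ to $v$ in $\GG\setminus f$ (the case we fixed without loss of generality). Hence no such $w$ exists, so $S_u \cap R_v = \emptyset$, and combining with $S_u \cap S_v = \emptyset$ gives $S_u \cap (R_v \cup S_v) = \emptyset$, as required.

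There is no genuine obstacle here: the argument is a one-line reachability/transitivity observation, and the only thing to be careful about is keeping the direction of reachability consistent with the chosen WLOG orientation (no $u \to v$ path). The claim does not use the obstruction-freedom hypothesis or any property of $\ALG$ at all; it is a purely graph-theoretic consequence of how $S_u$, $S_v$, and $R_v$ were defined, and it is the kind of structural fact that the subsequent indistinguishability argument in the main proof relies on.
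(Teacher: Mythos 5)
Your proof is correct and matches the paper's argument: the paper likewise derives a contradiction by taking $w \in S_u \cap (R_v \cup S_v)$, concatenating a path $u \to w$ (from $w \in S_u$) with a path $w \to v$, and contradicting the assumed absence of a $u$-to-$v$ path in $\GG\setminus f$. Your split into the two sub-cases $S_u \cap S_v$ and $S_u \cap R_v$ is only a cosmetic repackaging, since the paper handles $w \in R_v \cup S_v$ uniformly (a path from $w$ to $v$ exists in either case).
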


\begin{proof}[Proof of Claim~\ref{claim:lattice_lower_3}.]
Assume by contradiction that $S_u \cap (R_v \cup S_v) \neq \emptyset$.
Let $w \in S_u \cap (R_v \cup S_v)$.
Since $w \in S_u$, there exists a path from $u$ to $w$.
Since $w \in R_v \cup S_v$, there exists a path from $w$ to $v$.
Concatenating these paths creates a path from $u$ to $v$,
contradicting the assumption that no such path exists.
\end{proof}

Let $\mathcal{L}$ be a semi-lattice with partial order $\leq$ such that 
$x_u, x_v \in \mathcal{L}$, $x_u \nleq x_v$ and $x_v \nleq x_u$.
Let $\alpha_1$ be a fair execution of $\ALG$ where the processes and channels
in $f$ fail at the beginning, process $u$ invokes $\propose(x_u)$, and
no other operation is invoked in $\alpha_1$.
Because $u \in T$ and $\ALG$ is $(f,T)$-obstruction-free,
the $\propose(x_u)$ operation must eventually terminate 
with a return value $y_u$. 
By {\sf Downward validity}, $x_u \leq y_u$.
By {\sf Upward validity}, since no other $\propose()$ was invoked, $y_u \leq x_u$.
Thus, $y_u = x_u$. 
Let $\alpha_2$ be the prefix of $\alpha_1$ ending with the response to $\propose()$.
By Claim~\ref{claim:lattice_lower_1}, $R_u \cup S_u$ is unreachable from
$\PP \setminus (R_u \cup S_u)$, and thus,
the actions by processes in $R_u \cup S_u$
do not depend on those by processes in $\PP \setminus (R_u \cup S_u)$.
Then $\alpha_3 = \alpha_2|_{R_u \cup S_u}$, the projection of $\alpha_2$
to actions by processes in $R_u \cup S_u$, is an execution of $\ALG$.
By Claim~\ref{claim:lattice_lower_2}, $R_u$ is unreachable from $S_u$,
and thus, the actions by processes in $R_u$ do not depend on those by processes in $S_u$.
Then $\alpha = \alpha_3|_{R_u}\alpha_3|_{S_u}$ is an execution of $\ALG$.
Finally, $\alpha_3|_{R_u}$ contains no $\propose$ invocations (only startup steps).

Let $\beta_1$ be a fair execution of $\ALG$ that starts with all the actions from $\alpha_3|_{R_u}$,
followed by the failure of all processes and channels in $f$, 
followed by a $\propose(x_v)$ invocation by the process $v$.
Because $v \in T$ and $\ALG$ is $(f,T)$-obstruction-free,
the $\propose(x_v)$ operation must eventually terminate with a return value $y_v$.
By {\sf Downward validity}, $x_v \leq y_v$.
By {\sf Upward validity}, since no other $\propose()$ was invoked, $y_v \leq x_v$.
Thus, $y_v=x_v$.
Let $\beta_2$ be the prefix of $\beta_1$ ending with the response to $\propose()$.
By Claim~\ref{claim:lattice_lower_1}, $R_u \cup S_u$ is unreachable from
$\PP \setminus (R_u \cup S_u)$. 
By Claim~\ref{claim:lattice_lower_2}, $R_u$ is unreachable from $S_u$.
Therefore, $R_u$ is unreachable from $\PP \setminus R_u$.
By Claim~\ref{claim:lattice_lower_1}, $R_v \cup S_v$ is unreachable from
$\PP \setminus (R_v \cup S_v)$. 
Therefore, $R_u \cup R_v \cup S_v$ is unreachable from $\PP \setminus (R_u \cup R_v \cup S_v)$.
Thus, the actions in $\beta_2$ by processes in $R_u \cup R_v \cup S_v$ do not depend on those by processes
in $\PP \setminus (R_u \cup R_v \cup S_v)$.
Then, $\beta = \beta_2|_{R_u \cup R_v \cup S_v}$ is an execution of $\ALG$.
Recall that $\beta_1$ starts with $\alpha_3|_{R_u}$, and hence, so does $\beta$.
Let $\delta$ be the suffix of $\beta$ such that $\beta = \alpha_3|_{R_u}\delta$.

Consider the execution $\sigma = \alpha_3|_{R_u}\delta\alpha_3|_{S_u}$ where 
no process or channel fails.
By Claim~\ref{claim:lattice_lower_3}, $S_u \cap (R_v \cup S_v) = \emptyset$, and by the
definition of $R_u$, we have $S_u \cap R_u = \emptyset$.
Hence, $S_u \cap (R_u \cup R_v \cup S_v) = \emptyset$.
Then, given that $\delta$ only contains actions by processes in $R_u \cup R_v \cup S_v$,
we get $\sigma|_{R_u \cup R_v \cup S_v} = (\alpha_3|_{R_u}\delta\alpha_3|_{S_u})|_{R_u \cup R_v \cup S_v} = 
\alpha_3|_{R_u}\delta = \beta$.
Thus, $\sigma$ is indistinguishable from $\beta$ to the processes in $R_u \cup R_v \cup S_v$.
Also,
$\sigma|_{S_u} = (\alpha_3|_{R_u}\delta\alpha_3|_{S_u})|_{S_u} = \alpha_3|_{S_u} = 
(\alpha_3|_{R_u}\alpha_3|_{S_u})|_{S_u} = \alpha|_{S_u}$.
Thus, $\sigma$ is indistinguishable from $\alpha$ to the processes in $S_u$.
Finally, $\sigma|_{\PP \setminus (R_u \cup S_u \cup R_v \cup S_v)} = \epsilon$.
Thus, for every process, $\sigma$ is indistinguishable to this process from some execution of $\ALG$. 
Furthermore, each message received by a process in $\sigma$ has previously been sent by another process. 
Therefore, $\sigma$ is an execution of $\ALG$. 
However, in this execution $u$ decides $x_u$, $v$ decides $x_v$, and $x_u, x_v$ are incomparable in $\mathcal{L}$.
This contradicts the {\sf Comparability} property of lattice agreement.
The contradiction derives from assuming that $T$ is not strongly connected in $\RG{\GG}{f}$,
so the required follows.

%%% Local Variables:
%%% mode: latex
%%% TeX-master: "main.tex"
%%% End:

\fi

\end{document}